%% file: main.tex
\documentclass[letterpaper, 10 pt, conference]{ieeeconf}  % Comment this line out if you need a4paper

\IEEEoverridecommandlockouts                              % This command is only needed if 
\usepackage{amsmath} % assumes amsmath package installed
\usepackage{amssymb}  % assumes amsmath package installed
\usepackage{xcolor}
\usepackage{cite}
\usepackage{caption}
\usepackage{subcaption}

\usepackage{amssymb}
\usepackage{dsfont}
\usepackage{cite}
\usepackage{mathtools}
\usepackage{threeparttable}
\usepackage{colortbl}
\usepackage{siunitx}
\usepackage{multirow}
\usepackage{graphicx}
\usepackage{epstopdf}
\usepackage{times}     % 设置 Times New Roman 字体
\usepackage{balance}
\usepackage{multirow}
\usepackage{booktabs}
\usepackage{multicol}
\usepackage{epstopdf}

\DeclareSIUnit[]{\pu}{p.u.}
\DeclareSIUnit[]{\VA}{VA}

\DeclareSymbolFont{bbold}{U}{bbold}{m}{n}
\DeclareSymbolFontAlphabet{\mathbbold}{bbold}

\DeclarePairedDelimiterX\Set[2]{\lbrace}{\rbrace}%
{ #1 \,\delimsize| \,\mathopen{} #2 }

\let\labelindent\relax
\usepackage{enumitem}
\setlist[itemize]{leftmargin=*}
\usepackage{tikz}

\allowdisplaybreaks
\newtheorem{thm}{Theorem}

\newtheorem{rem}{Remark}
\newtheorem{lem}{Lemma}
\newtheorem{ass}{Assumption}

\title{\LARGE \bf Learning-Augmented Primal-Dual Control Design for Secondary Frequency Regulation}

\author{Yixuan Yu, Rajni K. Bansal, Yan Jiang, and Pengcheng You% <-this % stops a space
\thanks{P. You is the co-corresponding author.}
\thanks{This work was supported in part by the National Natural Science Foundation of China under Grant 72431001, in part by CUHKSZ University Development Fund, and in part by the Brij Disa Centre for Data Science and Artificial Intelligence (CDSA) at the Indian Institute of Management Ahmedabad (IIMA).}
\thanks{Y. Yu and P. You are with the Department of Control Science and Systems Engineering, Peking University, Beijing, China.
        {\tt\small yuyixuan834@stu.pku.edu.cn, pcyou@pku.edu.cn}}%
\thanks{R. K. Bansal is with the Operations and Decision Sciences Area, Indian Institute of Management Ahmedabad, Gujarat, India. {\tt\small rajnikantb@iima.ac.in}}%  
\thanks{Y. Jiang is with the School of Science and Engineering, The Chinese University of Hong Kong, Shenzhen, China.
        {\tt\small yjiang@cuhk.edu.cn}}%      
}

\begin{document}

\maketitle

\thispagestyle{empty}
\pagestyle{empty}

%%%%%%%%%%%%%%%%%%%%%%%%%%%%%%%%%%%%%%%%%%%%%%%%%%%%%%%%%%%%%%%%%%%%%%%%%%%%%%%%
\begin{abstract}
\input{Abstract}
\end{abstract}

%%%%%%%%%%%%%%%%%%%%%%%%%%%%%%%%%%%%%%%%%%%%%%%%%%%%%%%%%%%%%%%%%%%%%%%%%%%%%%%%
\section{Introduction}\label{sec: Introduction}
\input{Introduction}

\section{Problem Statement}\label{sec: Problem Statement}
\input{Problem_Statement}

\section{Controller Design and Analysis}\label{sec: General Form of Controller}
\input{General_Form_of_Controller}

\section{Learning for Transient Improvement}\label{sec: Neural Network}
\input{Neural_Network}

\section{Numerical Results}\label{sec: Numerical Results}
\input{Numerical_Results}

\section{Conclusion}\label{sec: Conlusion}
\input{Conclusion}

%\addtolength{\textheight}{-12cm}   % This command serves to balance the column lengths
                % the textheight of the last page by a suitable amount.
                                  % This command does not take effect until the next page
                                  % so it should come on the page before the last. Make
                                  % sure that you do not shorten the textheight too much.

%%%%%%%%%%%%%%%%%%%%%%%%%%%%%%%%%%%%%%%%%%%%%%%%%%%%%%%%%%%%%%%%%%%%%%%%%%%%%%%%

%%%%%%%%%%%%%%%%%%%%%%%%%%%%%%%%%%%%%%%%%%%%%%%%%%%%%%%%%%%%%%%%%%%%%%%%%%%%%%%%

%%%%%%%%%%%%%%%%%%%%%%%%%%%%%%%%%%%%%%%%%%%%%%%%%%%%%%%%%%%%%%%%%%%%%%%%%%%%%%%%

%%%%%%%%%%%%%%%%%%%%%%%%%%%%%%%%%%%%%%%%%%%%%%%%%%%%%%%%%%%%%%%%%%%%%%%%%%%%%%%%

% \section*{ACKNOWLEDGMENT}

% The preferred spelling of the word ÒacknowledgmentÓ in America is without an ÒeÓ after the ÒgÓ. Avoid the stilted expression, ÒOne of us (R. B. G.) thanks . . .Ó  Instead, try ÒR. B. G. thanksÓ. Put sponsor acknowledgments in the unnumbered footnote on the first page.

\bibliographystyle{IEEEtran}
\bibliography{root}          % on the last page of the document manually. It shortens

\section*{APPENDIX}\label{sec:appendix}
\input{Appendix}
                  
\end{document}

%% file: Abstract.tex
% PCY try
Frequency stability is fundamental to the secure operation of power systems. With growing uncertainty and volatility introduced by renewable generation, secondary frequency regulation - responsible for restoring the nominal system frequency - must now deliver enhanced performance not only in the steady state but also during transients. This paper presents a systematic framework to embed learning in the design of a primal–dual controller that provides provable (potentially exponential) stability and steady-state optimality, while simultaneously improving key transient metrics, including frequency nadir and control effort, in a data-driven manner. In particular, we employ the primal-dual dynamics of an optimization problem that encodes steady-state objectives to realize secondary frequency control with asymptotic stability guarantee. To augment transient performance of the controller via learning, a change of variables on control inputs, which will be deployed by neural networks, is proposed such that under mild conditions, stability and steady-state optimality are preserved. It further allows us to define a learning goal that accounts for the exponential convergence rate, frequency nadir and accumulated control effort, and use sample trajectories to enhance these metrics. Simulation results validate the theories and demonstrate superior transient performance of the learning-augmented primal-dual controller.

%% file: Introduction.tex
%关于这篇工作和DAI的差别，通过简要的话说明能够包含DAI，体现有显著不同，它是special case
The stability of power systems requires maintaining the frequency close to its nominal value (50 Hz or 60 Hz), which is essential for the safe and reliable operation of grid equipment. From a time-scale perspective, frequency control is traditionally divided into three layers~\cite{machowski1997power}: primary control for immediate stabilization, secondary control to restore the frequency to its nominal value, and tertiary control for economic dispatch to optimize steady-state generation.

%Recent research has sought to achieve secondary frequency control and economic optimality at the same time. A common approach is to cast the problem as an optimization task, where the objective function models the economic cost and the constraints capture the steady-state conditions of frequency regulation. In this framework, both the system dynamics and controller updates can be interpreted as the primal-dual dynamics of the optimization problem, thereby aligning the control objective with the economic objective.
Recent research has sought to achieve secondary frequency control and economic optimality at the same time, which can be broadly categorized into two classes of approaches. The first class leverages consensus-based algorithms to enhance the efficiency of classical PI controllers~\cite{andreasson2014distributed, dorfler2015breaking, shafiee2013distributed, Zhao2015acc}. The second approach models the objectives of frequency regulation and economic optimality jointly as an optimization problem, where the closed-loop system is interpreted through primal–dual dynamics~\cite{Zhao:2014bp, li2016tcns, cai2017distributed, mallada2017optimal, Wang2019tsg}.
Particularly, the primal-dual framework has attracted significant attention, as it provides a unifying perspective and bridges the control objective of frequency regulation with the economic objective of optimal power system operation. However, they are all focusing on the optimality of the steady state, ignoring the transient behavior of power systems.
%For example, \cite{Zhao:2014bp} cast load-side primary frequency control as an optimal load control problem, showing that swing dynamics with load control implement a distributed primal-dual algorithm. Similarly, \cite{li2016tcns} reformulate automatic generation control (AGC) with economic optimality and interpret its mechanism as the primal-dual dynamics of the corresponding optimization problem. In \cite{cai2017distributed, mallada2017optimal, Wang2019tsg}, the primal-dual framework is further leveraged to embed operational constraints into controller design. 

In practice, transient performance is also critical for the safety and efficiency of power systems. %Although various controllers can ensure frequency regulation and steady-state economic optimality, their transient behaviors may differ significantly~\cite{guha2022performance, srikanth2023improvement}. 
Some works~\cite{badal2021robust, das2023frequency} improve transient performance using model-based feedback control methods, such as LQR and MPC. These methods typically rely on linear feedback since it is challenging to incorporate nonlinear control laws into the optimal control problem. Nevertheless, nonlinear designs become important when state deviations are large, as they can capture richer system dynamics more accurately than linear approximations. This motivates the use of learning-based approaches, which offer strong capability in parameter optimization.

Another line of works has employed neural network (NN) to achieve better transient performance in frequency regulation tasks. Some of them~\cite{Cui2023tps, sun2023ESS} improve transient performance by parameterizing the controller with a monotone neural network and training its parameters. However, these approaches do not account for steady-state economic optimality. To address this limitation, a nonlinear extension of the linear distributed averaging-based integral (DAI) controller in~\cite{Zhao2015acc} is proposed in \cite{jiang2022ojcsys}, where the nonlinear component is parametrized by a monotone neural network. Although this design ensures steady-state optimality and enhances transient performance, it remains restricted in terms of the optimality objectives it can accommodate, such as frequency regulation task with constrained optimality. 
Taken together, these works consistently rely on monotone structures in their controllers. However, the use of monotonicity is typically assumed rather than derived, lacking a clear rationale or systematic justification of its underlying mechanism. Additionally, they employ similar transient metrics, yet these metrics offer little theoretical guarantee about the destination of convergence in learning.
This naturally raises a key question: \textit{how can we systematically design a controller that achieves secondary frequency regulation and steady-state economic optimality, while allowing learning to augment transient performance under theoretical guarantee?}
%More collectively, these works consistently adopt monotone structures in their controllers. However, monotonicity is usually assumed in these controllers, without fundamental rationale or systematic explanation of its underlying mechanism.

%In fact, many works \cite{Cui2023tps, jiang2022ojcsys, sun2023ESS, feng2023bridging} have employed neural network (NN) as a tool to parameterize parts of the controllers. Among them, \cite{Cui2023tps} and \cite{sun2023ESS} enforce the control input to be monotone with respect to frequency in frequency regulation tasks, improving transient performance through learning monotone neural networks. But they fail to take steady-state economic optimality into consideration. \cite{jiang2022ojcsys} generalizes the linear distributed averaging-based integral (DAI) controller~\cite{Zhao2015acc}. This generalized DAI achieves secondary frequency regulation and steady-state economic optimality through neighborhood communication, but also enhances transient performance by learning a monotone nonlinear component. These works consistently adopt monotone structures in their controllers. However, monotonicity is usually assumed in these controllers, without a clear structural explanation of where this monotonicity originates or why it works.
%without fundamental rationale or systematic explanation of its underlying mechanism

To address this, we extend the primal-dual framework to design a controller with nonlinear, learnable component. Specifically, we model the optimal steady-state objectives jointly as an optimization problem and reinterpret the controller as part of its associated primal-dual dynamics. This interpretation provides a principled way to incorporate nonlinear control law without compromising asymptotic stability or steady-state economic optimality. Building upon this, the nonlinear learnable component is parameterized by a monotone neural network, enabling the enhancement of transient performance. To further ensure the effectiveness of learning, we introduce metrics for transient performance and, in turn, provide theoretical potential of exponential convergence.

Our contributions are summarized as follows.
\begin{enumerate}
    \item \textbf{Unified framework for steady state and transient performance:} We develop a unified framework for controller design that simultaneously guarantees optimal steady-state objectives and augments transient performance. Specifically, we reinterpret the controller as a learning embedded primal–dual dynamics. The asymptotic stability and economic optimality are ensured by the underlying primal-dual structure, while transient enhancement is achieved through learning embedded in the framework.

    \item \textbf{Learning flexibility via nonlinear change of variables:} The learnable component in the controller originates from a change of variables from an optimization perspective, systematically embedding flexibility for learning into the primal-dual based controller design. Importantly, strict monotonicity of the change of variables ensures the preservation of steady-state stability and optimality.

    \item \textbf{Convergence rate metric for transient improvement:} We propose a metric to evaluate the convergence rate of the controller and guide the learning process. We further prove, theoretically, that the closed-loop system achieves potentially exponential stability after learning, based on this metric.
\end{enumerate}

%% file: Problem_Statement.tex
\subsection{Power System Model}
We consider a $n$-bus power network with the topology characterized by a directed and connected graph $\left(\mathcal{V},\mathcal{E} \right)$, where the buses indexed by $i \in \mathcal{V} :=\{1,\dots, n\} $ are connected through the transmission lines indexed by ordered pairs $(i,j) \in \mathcal{E} \subset \mathcal{V}\times \mathcal{V}$.  %\Set*{\{i,j\}}{i,j\in\mathcal{V},i\not=j}$. 
Without loss of generality, we assume there is only one aggregate controllable generator at each bus subject to a linear approximation of swing dynamics \cite{you2018stabilization}. 
%, whose dynamics is described with a second-order differential equation. 
More precisely, given the power disturbance $p_{i}$ 
at bus $i\in\mathcal{V}$, represented by a step change scalar, the dynamics of the phase angle $\theta_i$ and the frequency $\omega_i$ from the nominal frequency are given by 
\begin{subequations}\label{eq: sys-dyn-theta}
\begin{align}
    \dot{\theta}_i &=\omega_i\,,\label{frequency dynamics}\\
    M_i \dot{\omega}_i&= u_{i} - p_{i}- D_i \omega_i-\sum_{j=1}^n B_{ij}\left(\theta_i-\theta_j\right)\,,  \label{swing equation}
\end{align}
\end{subequations}
where $M_i>0$ is the generator inertia constant, $D_i>0$ summarizes the generator damping and frequency-dependent load, and $B_{ij}>0$ characterizes the sensitivity of the power flow on line $(i,j)\in \mathcal{E}$ to the phase angle differences between its two end buses.
%, if $(i,j) \in\mathcal{E}$ or $(j,i)\in \mathcal{E}$. 
$u_{i}$ is a controllable power injection, by generator, to be designed for frequency control. Note that the network model \eqref{eq: sys-dyn-theta} implicitly assumes that the variables $\theta_i$, $\omega_i$, $u_i$ are deviations from their nominal values.

For convenience in analysis, we define $\tilde\theta_{ij} := \theta_i - \theta_j$ as phase angle difference on line $(i,j) \in \mathcal{E}$ and rewrite the swing dynamics in a compact form:
\begin{subequations}\label{eq: swing dynamics in vector form}
    \begin{align}
        \dot{\tilde\theta}&=C^T\omega\,,\label{eq: frequency dynamics in vector form}\\
        M\dot{\omega}&=u-p-D\omega-CB\tilde\theta\,, \label{eq: swing equation in vector form}
    \end{align}
\end{subequations}
where $\tilde\theta:=(\tilde\theta_{ij}, (i,j)\in\mathcal{E})$, $\omega:=\left(\omega_i, i \in \mathcal{V} \right)$, $u:=(u_i, i\in\mathcal{V})$, and $p:=(p_i, i\in\mathcal{V})$. $M:=\operatorname{diag}(M_i, i\in\mathcal{V})$, $D:=\operatorname{diag}(D_i, i\in \mathcal{V})$, and $B:=\operatorname{diag}(B_{ij}, (i,j)\in\mathcal{E})$ are diagonal positive definite matrices. $C \in \mathbb{R}^{|\mathcal{V}| \times |\mathcal{E}|}$ is the incidence matrix of the directed graph $(\mathcal{V}, \mathcal{E})$.

\subsection{Optimal Steady-State Objectives}
The basic goal of secondary frequency control for the power system is to not only stabilize the frequencies, i.e., $\dot {\omega} = 0$, but also restore them to the nominal value such that $\omega=0$. Under these steady-state conditions, \eqref{eq: swing equation in vector form} reduces to a characterization of the nodal power balance over the network:
\begin{equation}\label{eq:nodal_balance}
    u - p - CB\tilde\theta = 0 \,.
\end{equation}

Beyond this, we further require that at the steady state, the control efforts have to be optimally allocated according to a given cost function $F(u) := \sum_{i=1}^n F_i(u_i)$ for the power injection at each bus $i$. We assume that $F(u):\mathbb{R}^n \to \mathbb{R}$ is strictly convex and second-order continuously differentiable with its gradient satisfying $\nabla F(u) = 0$ at $u=0$.\footnote{The condition $\nabla F(0)=0$ is natural from a physical standpoint, since it implies that zero control input corresponds to the minimum operating cost.}

Therefore, given $p$, the optimal steady-state problem that minimizes the aggregate control cost to achieve secondary frequency control is given by
\begin{subequations}\label{eq: opt-ss}
\begin{align}
    \underset{\tilde\theta, \omega, u}\min\quad &F(u) =\sum_{i=1}^nF_i(u_i)\\
	{\operatorname*{s.t.}}\quad & u-p-D\omega -CB\tilde\theta = 0 \label{eq: omega dot=0}\\
    & u-p-CB\tilde\theta = 0\,,\label{eq: nodal power balance in optimiation problem}
\end{align}
\end{subequations}
where \eqref{eq: omega dot=0} and \eqref{eq: nodal power balance in optimiation problem} can jointly enforce $\dot{\omega}=0$ and $\omega=0$ at optimal.

\subsection{Transient Performance Objectives}
%除了secondary frequency control和economic dispatch，我们最重要的目的是能够对一些指标进行优化：nadir + cost + speed（前两者在理论部分无法说明，speed在理论部分会需要通过preconditioning进行解释）
%直接介绍三种transient metrics，不用强调哪些理论保证哪些没有
Although the optimal steady-state objectives guarantee desirable long-run behavior, it lacks a mechanism to shape transient dynamics. In practice, the quality of power system operation strongly depends on transient behaviors, such as the speed of recovery (convergence rate), the maximum frequency deviation (frequency nadir), and the accumulated control effort required throughout the process.

%As mentioned in Section~\ref{sec:Introduction}, nonlinear control laws are challenging to formalize, despite their greater potential to enhance system stability.Then a natural approach is to leverage learning as a powerful tool for parameter optimization \cite{jiang2022ojcsys}
To employ learning for transient performance improvement, the key challenge is integrating sufficient flexibility into the controller design to apply learning effectively, while still achieving optimal steady-state objectives. Furthermore, we aim to design appropriate transient performance metrics that can guide the learning process, ensuring the resulting controller achieves favorable performance. 
% according to the metrics of interest.

%% file: General_Form_of_Controller.tex
%总体描述这一节的内容：首先第一部分是针对optimal secondary frequency control来回顾primal-dual based controller；第二部分是基于优化的视角从preconditioning的角度引入变量替换实现控制器速度的提升，给学习提供自由度
%可能存在一个gap：理论上解释的时候precondition和变量替换是为了提升速度，但是最后学习的时候指标是nadir+speed+cost可能解释一下，结合前一个section。

%先总写controller的形式（包括learning的部分的形式），但需要简短但总结性地给出那一部分是model-based，哪一部分是通过learning实现的
%介绍完总体形式之后开始介绍设计controller的原因和步骤。第一部分primal-dual和precondition一起说不用分开，因为都是从优化的角度来设计出这样一个控制器。
%第二部分是稳定性的分析，首先通过Lyapunov函数证明渐进稳定，然后说明如果参数理想，loss function能够实现指数稳定

%为了实现上述两个目标，我们给出如下的控制器，其中新引入了变量s, lambda, phi。
%从优化的视角，这样的变量更新方式能够保证闭环系统\eqref{eq: swing dynamics in vector form}\eqref{eq: generalized controller}的平衡点等价于优化问题\eqref{eq: opt-ss}的最优点，从而从model-base的角度保证最优稳态的实现；另一方面，可选择的s与u之间的关系为学习提供了空间，因此通过将控制器f参数化为具有特殊结构的神经网络，我们能实现对关心的暂态目标的优化。最后，我们将证明这样的变量更新方式结合神经网络的特殊结构将共同保证系统的渐进稳定。
In this section, we build upon the primal-dual framework for secondary frequency regulation to design a controller that not only achieves the optimal steady-state objectives but also provides sufficient flexibility for learning. We start by presenting the proposed controller. Then we analyze the controller from a primal-dual perspective, revealing the underlying structure that enables flexibility for learning without compromising stability and steady-state economic optimality. 

To achieve the optimal steady-state objectives and better transient performance at the same time, we propose the following controller\footnote{We avoid using $p$ directly in practice, as disturbances are unmeasurable. Following~\cite{Wang2019tsg}, it can be replaced with $-M\dot{\omega}+f(s)-D\omega-CB\tilde\phi$.}:
\begin{equation}\label{eq: u=f(s)}
    u = f(s)
\end{equation}
with
\begin{subequations}\label{eq: learnable controller}
    \begin{align}
 \dot{s}&=-\big[\nabla F(f(s))+\omega+\lambda\big] \,,\label{eq: s-dot}\\ 
  \dot{\lambda}&=\Gamma^\lambda\big[f(s)-p-CB\tilde\phi\big]\,,\\
 \dot{\tilde\phi}&=\Gamma^{\tilde\phi}\big[BC^T\lambda\big]\,,
    \end{align}
\end{subequations}
where $s:=(s_i, i\in\mathcal{V}), \lambda:=(\lambda_i, i\in\mathcal{V}), \tilde\phi:=(\tilde\phi_{ij}, (i,j)\in\mathcal{V})$ are newly introduced internal variables and $\Gamma^\lambda, \Gamma^{\tilde\phi}  \succ0$ denote diagonal constant control gains. $f(s):=(f_i(s_i), i \in \mathcal{V})$ is a nonlinear function.

The update rules \eqref{eq: learnable controller} of these variables are demonstrated to enforce the optimal steady-state objectives based on the system dynamics, i.e., in a model-based manner. At the same time, the nonlinear $u=f(s)$ is left unspecified to be learned, aiming at enhancing the transient performance.

%We further make the following assumption on the cost function, which the proposed controller is designed to optimize.
%\begin{ass}[Cost function $F$]\label{ass: F}
%    The cost function $F: \mathbb{R}^n \to \mathbb{R}$ is strictly convex and second-order continuously differentiable with its gradient satisfying $\nabla F(u) = \mathbbold{0}_n$ at $u=\mathbbold{0}_n$.
%\end{ass}

%Assumption~\ref{ass: F} broadens the class of cost functions that can be incorporated into the secondary frequency control framework, extending beyond the commonly used quadratic forms. This generalization allows the controller to accommodate a wider range of economic objectives in power system operation.

\subsection{Primal-Dual Interpretation}
To systematically analyze the proposed learnable controller and reveal the underlying mechanism that provides flexibility for learning, we interpret it from a primal-dual perspective. In particular, we construct the following optimization problem, whose primal-dual dynamics aligns with the closed-loop system \eqref{eq: swing dynamics in vector form}, \eqref{eq: u=f(s)}, and \eqref{eq: learnable controller}.
\begin{subequations}\label{eq: general optimization problem with precondition}
\begin{align}
    \min_{\tilde\theta, \omega, s,\tilde\phi} &\quad F(f(s))+\frac{1}{2}\omega^TD\omega \label{eq: general objective function for s}\\
    {\operatorname*{s.t.}} &\quad f(s)-p-D\omega-CB\tilde\theta=0 &&:\nu\label{eq: swing dynamics at equilibium for s}\\
    &\quad f(s)-p-CB\tilde\phi=0 &&:\lambda\label{eq: power balance at equilibrium for s}
\end{align}
\end{subequations}
where $\nu \in \mathbb{R}^n$ and $\lambda \in \mathbb{R}^n$ are Lagrange multipliers of constraints \eqref{eq: swing dynamics at equilibium for s} and \eqref{eq: power balance at equilibrium for s}, respectively. Accordingly, the Lagrangian of \eqref{eq: general optimization problem with precondition} is 
\begin{equation}\label{eq: Lagrangian for general optimization problem with precondition}
\begin{aligned}
    L(\tilde\theta, \omega, s, \nu, \lambda, \tilde\phi):=&F(f(s))+\frac{1}{2}\omega^TD\omega\\
    &+\nu^T(f(s)-p-D\omega-CB\tilde\theta)\\
    &+\lambda^T(f(s)-p-CB\tilde\phi)\,.
\end{aligned}
\end{equation}

Similarly to \cite{you2021tac}, the swing dynamics~\eqref{eq: swing dynamics in vector form} can be expressed as 
\begin{subequations}
    \begin{align}
        \dot{\tilde\theta} &= -\Gamma^{\tilde\theta} \nabla_{\tilde\theta} L(\tilde\theta, \omega, s, \nu, \lambda, \tilde\phi)\,,\\
        \omega &= \arg \min_{\omega}L(\tilde\theta, \omega, s, \nu, \lambda, \tilde\phi)\,, \label{eq: omega=nu}\\
        \dot{\nu} &= \Gamma^\nu\nabla_\nu L(\tilde\theta, \omega, s, \nu, \lambda, \tilde\phi)\,,
    \end{align}
\end{subequations}
with $\Gamma^{\tilde\theta} = B^{-1}$ and $\Gamma^\nu = M^{-1}$. Note that \eqref{eq: omega=nu} enforces $\omega \equiv \nu$ along the trajectory. Hence, we can use $\omega \leftrightarrow \nu$ interchangeably and define the following reduced Lagrangian:
\begin{equation}
    \begin{aligned}
        \tilde{L}(\tilde\theta, \omega, s, \lambda, \tilde\phi):=&F(f(s))+\frac{1}{2}\omega^TD\omega\\
    &+\omega^T(f(s)-p-D\omega-CB\tilde\theta)\\
    &+\lambda^T(f(s)-p-CB\tilde\phi)\,.
    \end{aligned}
\end{equation}
Then, the closed-loop system can be equivalently rewritten as
\begin{subequations}\label{eq: closed-loop system in primal-dual}
    \begin{align}
        \dot{\tilde\theta} &= -\Gamma^{\tilde\theta} \nabla_{\tilde\theta} \tilde{L}(\tilde\theta, \omega, s, \lambda, \tilde\phi)\,,\\
        \dot{\omega}&=\Gamma^\omega\nabla_\omega \tilde{L}(\tilde\theta, \omega, s, \lambda, \tilde\phi)\,,\\
        \dot{s} &= -(J_f(s))^{-1}\nabla_s \tilde{L}(\tilde\theta, \omega, s, \lambda, \tilde\phi) \,,\label{eq: s-dot in primal-dual}\\
        \dot{\lambda} &=\Gamma^\lambda\nabla_\lambda \tilde{L}(\tilde\theta, \omega, s, \lambda, \tilde\phi)\,,\\
        \dot{\tilde\phi} &= -\Gamma^{\tilde\phi}\nabla_{\tilde\phi}\tilde{L}(\tilde\theta, \omega, s, \lambda, \tilde\phi)\,,
    \end{align}
\end{subequations}
with $\Gamma^\omega=\Gamma^\nu, \Gamma^\lambda, \Gamma^{\tilde\phi} \succ 0$ being diagonal step-size matrices. $J_f(s):=\operatorname*{diag}(f_i'(s_i), i \in \mathcal{V})$ is the Jacobian of $f(\cdot)$ to $s$.

\begin{rem}[Precondition]\label{rem: precondition}
    Interpreting the controller within the primal-dual framework reveals a key role of nonlinear structure $u = f(s)$. From an optimization perspective, the change of variables $u=f(s)$ acts as a preconditioning mechanism: This technique improves the conditioning of a optimization problem and accelerates convergence to the optimal solution by modifying the gradient’s magnitude and direction~\cite{meijerink1977iterative}. In our formulation, the gradient flow of problem~\eqref{eq: general optimization problem with precondition} is rescaled, as reflected in the additional step-size matrix $(J_f(s))^{-1}$ in \eqref{eq: s-dot in primal-dual}.  
\end{rem}

We also highlight the role of introducing $\tilde\phi$ as virtual phase angle differences in \eqref{eq: power balance at equilibrium for s} instead of directly using $\tilde\theta$. Basically, this substitution enables the swing dynamics \eqref{eq: swing dynamics in vector form} to be embedded into the primal-dual dynamics of problem~\eqref{eq: general optimization problem with precondition}. Otherwise, the gradient of $\tilde{L}(\tilde\theta, \omega, s, \lambda, \tilde\phi)$ with respect to $\tilde\theta$ would not align with the frequency dynamics~\eqref{eq: frequency dynamics in vector form} that are determined by the physics.

%Consequently, the closed-loop system composed of \eqref{eq: swing dynamics in vector form}, \eqref{eq: u=f(s)}, and \eqref{eq: learnable controller} is reinterpreted as the primal-dual gradient flow of $\tilde{L}(\tilde\theta, \omega, s, \lambda, \tilde\phi)$.
\subsection{Stability Analysis}
Having reinterpreted the closed-loop system \eqref{eq: swing dynamics in vector form}, \eqref{eq: u=f(s)}, and \eqref{eq: learnable controller} as the primal-dual gradient flow of $\tilde{L}(\tilde\theta, \omega, s, \lambda, \tilde\phi)$, now we proceed to characterize the equilibrium of primal-dual dynamics, i.e., the equilibrium of the closed-loop system. Note that the designed optimization problem \eqref{eq: general optimization problem with precondition} is inherently non-convex, since the function $f(s)$ in objective function is non-convex and the equality constraints \eqref{eq: swing dynamics at equilibium for s}, \eqref{eq: power balance at equilibrium for s} are non-affine. Therefore, it is nontrivial to ensure that the primal-dual dynamics converge to a primal-dual optimal solution to \eqref{eq: Lagrangian for general optimization problem with precondition}. To this end, we first make an assumption for the nonlinear change of variables $u=f(s)$ as follows:

\begin{ass}[Change of variables $f$]\label{ass: f} $\forall i\in \mathcal{V}$, $f_i: \mathbb{R} \to \mathbb{R}$ is a strictly increasing and Lipschitz continuous function passing through the origin.
\end{ass}

With Assumption~\ref{ass: f} and the strict convexity of $F$, the composition $\nabla F(f(s))$ is also a strictly monotone function passing through the origin. This structure generalizes the principle of linear negative feedback, which underlies many classical controllers, including droop-like laws in power systems. With this, the following lemma shows that problem \eqref{eq: general optimization problem with precondition} has a unique global optimal solution. Then Lemma~\ref{lem: equivalence of two optimization problem} proves that problem~\eqref{eq: general optimization problem with precondition} is equivalent to \eqref{eq: opt-ss}. Finally, Theorem~\ref{thm: closed-loop equilibrium} establishes the equivalence between closed-loop equilibrium and the optimal solution to \eqref{eq: general optimization problem with precondition}, thereby ensuring the achievement of optimal steady-state objectives due to Lemma~\ref{lem: equivalence of two optimization problem}.
\begin{lem}[Uniqueness]\label{lem: unique global optimal}
    Suppose Assumption~\ref{ass: f} holds. The optimization problem \eqref{eq: general optimization problem with precondition} admits a unique global optimal solution $(\tilde\theta^*, \omega^*, s^*, \tilde\phi^*)$. Furthermore, there exists a unique Lagrange multiplier $(\nu^*, \lambda^*)$ such that $(\tilde\theta^*, \omega^*, s^*, \tilde\phi^*, \nu^*, \lambda^*)$ is the only solution satisfying the KKT conditions of problem~\eqref{eq: general optimization problem with precondition}.
\end{lem}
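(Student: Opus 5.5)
The plan is to reduce problem~\eqref{eq: general optimization problem with precondition} to a convex problem in the physical input $u=f(s)$, and then to analyze the KKT system directly. Two conventions are needed for uniqueness to make sense. First, as is implicit in the physics ($\tilde\theta=C^T\theta$), I will regard $\tilde\theta$ and $\tilde\phi$ as living in $\operatorname{range}(C^T)$. On that subspace the map $x\mapsto CBx$ is injective: $CBx=0$ gives $x^TBx=0$ for $x=C^T y$, hence $x=0$. Second, since $J_f(s)$ appears in \eqref{eq: closed-loop system in primal-dual}, I will use that $f_i'>0$ wherever stationarity in $s$ is written, and that the relevant optimal $u^*$ lies in the range of $f$; otherwise the problem would have no feasible optimum.

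\textbf{Primal reduction.} Left-multiplying \eqref{eq: power balance at equilibrium for s} by $\mathbb{1}^T$ and using $\mathbb{1}^TC=0$ gives $\mathbb{1}^Tf(s)=\mathbb{1}^Tp$. Subtracting the two constraints gives $D\omega=CB(\tilde\phi-\tilde\theta)$. Writing $u=f(s)$, any feasible point has objective at least $F(u)$ with $\mathbb{1}^Tu=\mathbb{1}^Tp$.

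Conversely, for any such $u$, the vector $u-p$ lies in $\operatorname{range}(C)=\operatorname{range}(CBC^T)$, since the graph is connected. Hence there is a unique $\tilde\theta=\tilde\phi\in\operatorname{range}(C^T)$ with $CB\tilde\theta=u-p$, and with $\omega=0$ this point attains the bound $F(u)$.

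The problem is therefore equivalent to $\min F(u)$ subject to $\mathbb{1}^Tu=\mathbb{1}^Tp$. Its minimizer $u^*$ is unique by strict convexity, and $\omega^*=0$ because $D\succ0$. The point $s^*=f^{-1}(u^*)$ is unique because each $f_i$ is strictly increasing, and $\tilde\theta^*=\tilde\phi^*$ is unique by the injectivity noted above.

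\textbf{KKT uniqueness.} Stationarity of \eqref{eq: Lagrangian for general optimization problem with precondition} yields four conditions:
\begin{itemize}
\item in $\omega$: $D\omega-D\nu=0$, so $\nu=\omega$;
\item in $\tilde\theta$: $BC^T\nu=0$;
\item in $\tilde\phi$: $BC^T\lambda=0$, so $\lambda=\alpha\mathbb{1}$ for some scalar $\alpha$ by connectivity;
\item in $s$: $J_f(s)\big[\nabla F(f(s))+\nu+\lambda\big]=0$, hence $\nabla F(f(s))=-(\omega+\alpha\mathbb{1})$ since $f_i'>0$.
\end{itemize}
Feasibility gives $D\omega=CB(\tilde\phi-\tilde\theta)$. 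Therefore
\[
\omega^TD\omega=(BC^T\omega)^T(\tilde\phi-\tilde\theta)=0,
\]
because $BC^T\omega=BC^T\nu=0$, and so $\omega=\nu=0$. The remaining conditions say $F_i'(u_i)=-\alpha$ for all $i$, together with $\sum_iu_i=\sum_ip_i$.

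Each $F_i'$ is strictly increasing, so $u_i=(F_i')^{-1}(-\alpha)$ and $\sum_i(F_i')^{-1}(-\alpha)$ is strictly monotone in $\alpha$. This pins down $\alpha$, and hence $u$, uniquely, and this $u$ coincides with $u^*$ since it satisfies the convex KKT conditions of the reduced problem. As before, this determines $s^*$, $\tilde\theta^*=\tilde\phi^*$ and $(\nu^*,\lambda^*)=(0,\alpha^*\mathbb{1})$ uniquely. Conversely, this tuple satisfies all KKT conditions, so it is the only KKT point.

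\textbf{Main obstacle.} The delicate step is existence rather than uniqueness. One needs $\min F(u)$ subject to $\mathbb{1}^Tu=\mathbb{1}^Tp$ to be attained and $u^*$ to lie in the range of $f$. I would handle the first by a one-dimensional argument: each $F_i$ is strictly convex with minimum at $0$, and I would argue that $\alpha\mapsto\sum_i(F_i')^{-1}(-\alpha)$ covers $\mathbb{1}^Tp$, assuming mild coercivity if needed. For the second, I would impose surjectivity of $f_i$ (or $u^*\in\operatorname{range}f$) as part of Assumption~\ref{ass: f}. I would also state explicitly the range-of-$C^T$ convention for $\tilde\theta$ and $\tilde\phi$, since on meshed graphs $\tilde\theta$ is otherwise unique only up to $\ker(CB)$.
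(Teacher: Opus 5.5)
Your proposal is correct, under hypotheses that the paper's own argument also uses tacitly, and it takes a genuinely different route.

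\textbf{The paper's route.} It keeps the full auxiliary problem in $(\tilde\theta,\omega,u,\tilde\phi)$ and argues in steps:
\begin{enumerate}
\item It gets $\omega^*=0$ by replacing $(\tilde\theta^*,\omega^*)$ with $(\tilde\phi^*,0)$.
\item It gets uniqueness of $u^*$ from a midpoint strict-convexity argument.
\item It gets uniqueness of $\tilde\theta^*,\tilde\phi^*$ from $\ker(CBC^T)=\operatorname{span}\{\mathbf{1}\}$ after writing $\tilde\theta=C^T\theta$. This is exactly your range-of-$C^T$ convention, which is genuinely needed on meshed graphs.
\item It transfers optimality through $u=f(s)$.
\item For the multipliers it checks LICQ (the constraint Jacobian has rank $2n$), which gives existence and uniqueness of $(\nu^*,\lambda^*)$ at the minimizer.
\item It asserts uniqueness of the KKT point from convexity of the $u$-problem plus a pull-back through $f$ ``by similar reasoning''.
\end{enumerate}

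\textbf{Your route.} You collapse the problem to $\min F(u)$ subject to $\mathbf{1}^Tu=\mathbf{1}^Tp$ and solve the KKT system of the nonconvex $s$-problem directly. The identity $\omega^TD\omega=(BC^T\nu)^T(\tilde\phi-\tilde\theta)=0$ forces $\omega=\nu=0$ at every KKT point, and equal marginal costs then pin down $\alpha$. This buys three things:
\begin{itemize}
\item It needs no constraint qualification.
\item It handles the nonconvexity head-on instead of by transfer.
\item It yields the explicit multipliers $\nu^*=0$, $\lambda^*=-\nabla F(u^*)\in\operatorname{span}\{\mathbf{1}\}$, which are exactly what the later stability proof uses.
\end{itemize}

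\textbf{Hypotheses you make explicit.} You state openly three requirements the paper glosses over:
\begin{itemize}
\item Surjectivity of $f$. Assumption~\ref{ass: f} does not imply it (e.g.\ $f_i=\arctan$), although the paper asserts $f$ is a bijection.
\item Differentiability of $f$ with $f_i'>0$ at the relevant points. This is tacit in the paper's rank computation and in its identification of $s$-stationarity with $u$-stationarity.
\item Existence of a minimizer, which the paper never proves.
\end{itemize}

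On existence, your coercivity hedge is unnecessary. Strict convexity of the separable $F$ forces each $F_i$ to be strictly convex, and $\nabla F(0)=0$ gives $F_i'(0)=0$. Hence each $F_i$ grows at least linearly away from $0$, so $F$ is coercive and the reduced problem always attains its minimum. Equivalently, $\alpha\mapsto\sum_i(F_i')^{-1}(-\alpha)$ sweeps all of $\mathbb{R}$ on its interval of definition.
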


 Lemma~\ref{lem: unique global optimal} ensures that $(\tilde\theta^*, \omega^*, s^*, \tilde\phi^*, \nu^*, \lambda^*)$ is the unique primal-dual optimal solution to problem~\eqref{eq: general optimization problem with precondition}, and that no other feasible point $(\hat{\tilde\theta}, \hat{\omega}, \hat{s}, \hat{\tilde\phi})\neq (\tilde\theta^*, \omega^*, s^*, \tilde\phi^*)$, together with any multipliers $(\nu, \lambda)$, can satisfy the KKT conditions, even if problem~\eqref{eq: general optimization problem with precondition} is nonconvex.
 
\begin{rem}[Hidden convexity]
With Assumption~\ref{ass: f}, the inverse change of variables $s=f^{-1}(u)$ effectively transforms \eqref{eq: general optimization problem with precondition} into a convex optimization problem, as shown in the proof of Lemma~\ref{lem: unique global optimal}. This highlights that problem~\eqref{eq: general optimization problem with precondition} is not arbitrarily constructed, but rather embeds a hidden convex structure~\cite[Chapter 4.2.5]{Boyd2004convex} - can be converted to a convex problem with variable bijection. 
%From an optimization viewpoint, the change of variables $u=f(s)$ can reveal this hidden convexity.
\end{rem}

Upon Lemma~\ref{lem: unique global optimal}, we now turn to investigate the relationship between problem~\eqref{eq: general optimization problem with precondition} and the optimal steady-state problem~\eqref{eq: opt-ss}. 
%Our ultimate goal is to show that the closed-loop equilibrium indeed achieves the optimal steady-state objective. To this end, we next establish the equivalence between the optimal solutions to problem~\eqref{eq: general optimization problem with precondition} and problem~\eqref{eq: opt-ss}.

Intuitively, the two equality constraints \eqref{eq: swing dynamics at equilibium for s} and \eqref{eq: power balance at equilibrium for s} jointly ensure that the optimal solution to problem \eqref{eq: general optimization problem with precondition} satisfies $\omega_i^*=0$ if one temporarily views $\tilde\theta$ and $\tilde\phi$ as the same. Then the additional quadratic term $\omega^TD\omega/2$ in the objective function does not alter the optimal solution. This observation indicates that the designed optimization problem \eqref{eq: general optimization problem with precondition} is equivalent to the optimal steady-state problem \eqref{eq: opt-ss}. The following lemma formalizes this equivalence.
\begin{lem}[Equivalence of optimization problems]\label{lem: equivalence of two optimization problem}
Suppose Assumption~\ref{ass: f} holds. A point $(\tilde\theta^*, \omega^*, s^*, \tilde\phi^*)$ is an optimal solution to \eqref{eq: general optimization problem with precondition} if and only if $(\tilde\theta^*, \omega^*, u^*)$ is an optimal solution to \eqref{eq: opt-ss}, with $u^*=f(s^*)$ and $\tilde\theta^* = \tilde\phi^*$.
\end{lem}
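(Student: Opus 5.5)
We reduce both problems to the same convex problem in the variable $u$ and then compare their feasible sets and objectives directly.

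\textbf{Step 1 (change of variables).} By Assumption~\ref{ass: f}, each $f_i$ is strictly increasing and continuous, hence injective. Its range $f_i(\mathbb{R})$ is an interval containing $0$. As in the proof of Lemma~\ref{lem: unique global optimal}, we use the bijection $s\leftrightarrow u=f(s)$ to rewrite \eqref{eq: general optimization problem with precondition} as a problem in $(\tilde\theta,\omega,u,\tilde\phi)$. The objective becomes $F(u)+\frac12\omega^TD\omega$, the constraints become $u-p-D\omega-CB\tilde\theta=0$ and $u-p-CB\tilde\phi=0$, and $u$ ranges over $f(\mathbb{R}^n)$. We keep this range restriction explicit and address it in Step 3.

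\textbf{Step 2 (feasible points have $\omega=0$ and optimal solutions have $\tilde\theta=\tilde\phi$).} Subtracting the two constraints gives $D\omega = CB(\tilde\phi-\tilde\theta)$. Left-multiplying by $\mathbf{1}^T$ and using $\mathbf{1}^TC=0$ yields $\mathbf{1}^TD\omega=0$, which by itself does not force $\omega=0$. So we argue through optimality instead.

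Take any feasible $(\tilde\theta,\omega,u,\tilde\phi)$. The point $(\tilde\phi,0,u,\tilde\phi)$ is also feasible: its first constraint coincides with the second. Its objective is $F(u)\le F(u)+\frac12\omega^TD\omega$, with strict inequality when $\omega\neq0$ because $D\succ0$. Hence every optimal solution has $\omega^*=0$. The first constraint then reads $CB(\tilde\theta^*-\tilde\phi^*)=0$.

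Concluding $\tilde\theta^*=\tilde\phi^*$ requires $\ker(CB)=\{0\}$ on the edge space, i.e.\ $C$ has full column rank (a tree graph). Otherwise the optimizer is only unique modulo cycle flows. Since Lemma~\ref{lem: unique global optimal} asserts uniqueness of $(\tilde\theta^*,\tilde\phi^*)$, we take the same structural fact for granted: $\tilde\theta$ lies in the range of $C^T$ (it consists of angle differences, so $\tilde\theta=C^T\theta$) and $CBC^T$ is injective on that range. Under this convention $\tilde\theta^*=\tilde\phi^*$.

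\textbf{Step 3 (comparing with \eqref{eq: opt-ss}).} With $\omega=0$ and $\tilde\theta=\tilde\phi$ imposed, the reduced problem is exactly $\min F(u)$ subject to $u-p-CB\tilde\theta=0$. This is \eqref{eq: opt-ss} after noting that its first constraint forces $D\omega=0$, i.e.\ $\omega=0$, there.

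Both directions then follow. If $(\tilde\theta^*,\omega^*,s^*,\tilde\phi^*)$ is optimal for \eqref{eq: general optimization problem with precondition}, then $(\tilde\theta^*,0,f(s^*))$ is feasible for \eqref{eq: opt-ss}, and any better point there would lift to a better point of \eqref{eq: general optimization problem with precondition} via $s=f^{-1}(u)$. Conversely, an optimizer $(\tilde\theta^*,0,u^*)$ of \eqref{eq: opt-ss} lifts to $(\tilde\theta^*,0,f^{-1}(u^*),\tilde\theta^*)$, and any competitor in \eqref{eq: general optimization problem with precondition} maps down to a competitor in \eqref{eq: opt-ss} with no greater objective, by the projection argument of Step 2.

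The main obstacle is the range issue: the lift $s=f^{-1}(u)$ needs $u^*\in f(\mathbb{R}^n)$. The intended resolution is that Lemma~\ref{lem: unique global optimal} already guarantees an optimizer of \eqref{eq: general optimization problem with precondition} exists. Its image $u=f(s^*)$ is the minimizer of the strictly convex reduced problem over the convex set $f(\mathbb{R}^n)$, and we expect it to coincide with the unconstrained minimizer of \eqref{eq: opt-ss}. If $f$ is surjective, for example under a lower bound on $f_i'$, this is immediate. Otherwise we would rely on the existence statement in Lemma~\ref{lem: unique global optimal}, which presumably already handled this point.
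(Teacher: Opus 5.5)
\textbf{Same route as the paper, but the final step is left open.} Your proof follows the paper's argument. You use the change of variables of Lemma~\ref{lem: equivalence by change a variable}. You use the projection $(\tilde\theta,\omega,s,\tilde\phi)\mapsto(\tilde\phi,0,s,\tilde\phi)$ to get $\omega^*=0$, as in Lemma~\ref{lem: uniqueness of optimality of general optimization problem}. You obtain $\tilde\theta^*=\tilde\phi^*$ from $\ker(CBC^T)=\operatorname{span}\{1_n\}$ under the convention $\tilde\theta=C^T\theta$, $\tilde\phi=C^T\phi$; the paper uses this convention without stating it, and your remark about cycles is exactly why it is needed. You then compare the reduced problems.

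The step you leave open is a real gap, and deferring it to Lemma~\ref{lem: unique global optimal} does not close it. The paper handles this step only by asserting that Assumption~\ref{ass: f} makes $f$ a bijection of $\mathbb{R}^n$. It does not: $f_i=\arctan$ is strictly increasing, $1$-Lipschitz and vanishes at $0$. The existence part of Lemma~\ref{lem: unique global optimal} rests on that same assertion, and it can fail. Take two buses joined by one line, $F(u)=\tfrac12(u_1^2+100u_2^2)$, $p_1+p_2=2$ and $f_i=\arctan$. The minimizer of \eqref{eq: opt-ss} has $u_1^*=200/101>\pi/2$. Along the feasible line $u_1+u_2=2$ inside $f(\mathbb{R}^2)=(-\pi/2,\pi/2)^2$, the cost keeps decreasing toward the boundary, so \eqref{eq: general optimization problem with precondition} has no minimizer. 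Hence your claim ``we expect it to coincide with the unconstrained minimizer'' is exactly what must be proved, not borrowed.

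\textbf{The missing idea is openness plus convexity.} With it, neither surjectivity nor existence is needed for the lemma as stated, where $s^*$ is given and $u^*:=f(s^*)$.

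Your ($\Leftarrow$) argument already works. The ``lift'' of $u^*$ is just $s^*$, and competitors of \eqref{eq: general optimization problem with precondition} are pushed down by $s\mapsto f(s)$, which needs no inverse.

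For ($\Rightarrow$), after your Step~2 suppose some feasible $(\tilde\theta',\omega',u')$ of \eqref{eq: opt-ss} (necessarily $\omega'=0$) had $F(u')<F(u^*)$.
Because the constraints are affine, $u_t:=u^*+t(u'-u^*)$ and $\tilde\theta_t:=\tilde\theta^*+t(\tilde\theta'-\tilde\theta^*)$ satisfy $u_t-p-CB\tilde\theta_t=0$. Convexity gives $F(u_t)\le(1-t)F(u^*)+tF(u')<F(u^*)$ for all $t\in(0,1]$.
Each $f_i$ is continuous and strictly increasing on $\mathbb{R}$, so $f(\mathbb{R}^n)=\prod_i f_i(\mathbb{R})$ is a product of open intervals, hence open. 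Thus $u_t\in f(\mathbb{R}^n)$ for small $t>0$.
Then $(\tilde\theta_t,0,f^{-1}(u_t),\tilde\theta_t)$ is feasible for \eqref{eq: general optimization problem with precondition} with cost strictly below that of $(\tilde\theta^*,\omega^*,s^*,\tilde\phi^*)$, a contradiction.

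Surjectivity is only needed for the stronger reading that every minimizer of \eqref{eq: opt-ss} has the form $f(s^*)$. The example above shows that reading is false under Assumption~\ref{ass: f} alone.
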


%With Lemmas~\ref{lem: unique global optimal} and \ref{lem: equivalence of two optimization problem}, we now claim that the closed-loop equilibrium achieves the optimal steady-state objectives by the following theorem. 
With Lemmas~\ref{lem: unique global optimal} and \ref{lem: equivalence of two optimization problem} in mind, we now claim that the closed-loop equilibrium attains the optimal solution to problem~\eqref{eq: general optimization problem with precondition}, as stated in the following theorem, thereby achieving the optimal steady-state objective. 
\begin{thm}[Closed-loop equilibrium]\label{thm: closed-loop equilibrium}
    Given $\omega^* = \nu^*$, a point $(\tilde\theta^*, \omega^*, s^*, \lambda^*, \tilde\phi^*)$ is an equilibrium of the closed-loop system \eqref{eq: swing dynamics in vector form}, \eqref{eq: u=f(s)}, and \eqref{eq: learnable controller} if and only if $(\tilde\theta^*, \omega^*, s^*, \nu^*, \lambda^*, \tilde\phi^*)$ is a primal-dual optimal solution to problem~\eqref{eq: general optimization problem with precondition}.
\end{thm}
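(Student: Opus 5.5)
The plan is to write out both sets of conditions explicitly and show they coincide, with the key step being invertibility of $J_f(s)$ and the uniqueness from Lemma~\ref{lem: unique global optimal}.

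\textbf{Equilibrium conditions.} Setting the right-hand sides of \eqref{eq: swing dynamics in vector form} and \eqref{eq: learnable controller} to zero, with $u=f(s)$, gives
\begin{align*}
&C^T\omega^*=0, \qquad f(s^*)-p-D\omega^*-CB\tilde\theta^*=0,\\
&\nabla F(f(s^*))+\omega^*+\lambda^*=0, \qquad f(s^*)-p-CB\tilde\phi^*=0, \qquad BC^T\lambda^*=0.
\end{align*}

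\textbf{KKT conditions.} For problem~\eqref{eq: general optimization problem with precondition}, the KKT conditions are primal feasibility, i.e. \eqref{eq: swing dynamics at equilibium for s}--\eqref{eq: power balance at equilibrium for s}, together with stationarity of $L$ in \eqref{eq: Lagrangian for general optimization problem with precondition}:
\begin{align*}
&\nabla_{\tilde\theta}L=-BC^T\nu^*=0, \qquad \nabla_\omega L=D\omega^*-D\nu^*=0,\\
&\nabla_s L=J_f(s^*)\big[\nabla F(f(s^*))+\nu^*+\lambda^*\big]=0, \qquad \nabla_{\tilde\phi}L=-BC^T\lambda^*=0.
\end{align*}

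\textbf{Matching the two systems.} Under $\omega^*=\nu^*$, which is also forced by $\nabla_\omega L=0$ since $D\succ0$, the correspondence is term by term:
\begin{itemize}
\item $C^T\omega^*=0$ matches $BC^T\nu^*=0$, because $B$ is invertible.
\item $BC^T\lambda^*=0$ is identical in both systems.
\item The two power-balance equalities are exactly the two primal feasibility constraints.
\item The $s$-equation matches $\nabla_s L=0$, provided $J_f(s^*)$ is nonsingular.
\end{itemize}

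\textbf{Main obstacle: invertibility of $J_f(s^*)$.} Assumption~\ref{ass: f} gives only strict monotonicity and Lipschitz continuity, so $f_i'(s_i^*)$ may vanish or fail to exist. In that case $\nabla_s L=0$ is weaker than the $s$-equilibrium condition, and the two systems are not literally equivalent.

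The direction "equilibrium $\Rightarrow$ KKT" is still immediate. Multiplying the equilibrium equation by $J_f(s^*)$ (a generalized or one-sided derivative where needed) gives $\nabla_s L=0$.

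For the converse, I will not rely on $J_f$. Instead I use the "true" first-order condition from the hidden convex reformulation in the proof of Lemma~\ref{lem: unique global optimal}. In the variable $u=f(s)$, optimality requires $\nabla F(u^*)+\nu^*+\lambda^*=0$, which is exactly the $s$-equilibrium equation.

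\textbf{Closing the converse.} Let $(\tilde\theta^*,\omega^*,s^*,\nu^*,\lambda^*,\tilde\phi^*)$ be the primal-dual optimal solution. By Lemma~\ref{lem: unique global optimal} it is unique, and the multiplier is unique in the sense of the convex reformulated problem. It therefore satisfies the convex KKT conditions, including $\nabla F(f(s^*))+\nu^*+\lambda^*=0$. Combined with the remaining conditions listed above, this point is an equilibrium.

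When $f$ is differentiable with $f_i'>0$, so that $J_f$ is invertible as implicitly used in \eqref{eq: s-dot in primal-dual}, the two condition sets coincide directly and this extra step is unnecessary.
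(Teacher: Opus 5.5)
Your proposal is correct and follows the paper's skeleton. With $\omega^*=\nu^*$, the equilibrium equations of \eqref{eq: swing dynamics in vector form}, \eqref{eq: u=f(s)}, \eqref{eq: learnable controller} are matched term by term with stationarity of $L$ plus primal feasibility, and Lemma~\ref{lem: unique global optimal} connects KKT points to the primal-dual optimum. The difference lies in the $s$-component.

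The paper simply appeals to the rewriting \eqref{eq: closed-loop system in primal-dual}. That rewriting tacitly assumes $J_f(s)$ is invertible, so that $\dot s=0$ and $\nabla_s L=0$ are interchangeable. You rightly note that Assumption~\ref{ass: f} does not guarantee this: $f_i'$ may have isolated zeros, and the ReLU parameterization in Section~\ref{sec: NN training} has kinks. In the converse you therefore obtain $\nabla F(f(s^*))+\nu^*+\lambda^*=0$ from the KKT conditions of the convex reformulation \eqref{eq: general optimization problem}, which needs no differentiability of $f$. The paper's argument is shorter but only literally valid when every $f_i'(s_i^*)$ exists and is positive; yours covers Assumption~\ref{ass: f} as stated.

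For consistency, handle the forward direction the same way. Once $J_f$ may be singular, the KKT system of \eqref{eq: general optimization problem with precondition} can admit spurious solutions. Its $i$-th $s$-component $f_i'(s_i)\,[F_i'(f_i(s_i))+\nu_i+\lambda_i]$ vanishes whenever $f_i'(s_i)=0$, whatever the bracket. So stopping at ``equilibrium $\Rightarrow$ KKT'' and then invoking Lemma~\ref{lem: unique global optimal} relies on a uniqueness-of-KKT-points claim that fails in exactly the degenerate case you raised. Instead, observe that the equilibrium equations, read in $u^*=f(s^*)$, are precisely the KKT conditions of the convex problem \eqref{eq: general optimization problem}. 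Since its constraints are affine, these conditions are sufficient for optimality, and you can transfer optimality back to $s^*$ via Lemma~\ref{lem: equivalence by change a variable}.
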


With the characterization of the closed-loop equilibrium, we define the equilibrium set as
$$E := \big\{(\tilde\theta, \omega, s, \lambda, \tilde\phi)\ \big|\ \dot{\tilde\theta}, \dot{\omega}, \dot{s}, \dot{\lambda}, \dot{\tilde\phi} = 0\big\}\,,$$
and establish the stability of controller~\eqref{eq: u=f(s)} and \eqref{eq: learnable controller} through the following theorem.

\begin{thm}[Asymptotic stability]\label{thm: asymptotic stability}
    Suppose Assumption~\ref{ass: f} holds. Starting from any initial point in $\mathbb{R}^{3|\mathcal{V}|+2|\mathcal{E}|}$, the trajectory of \eqref{eq: swing dynamics in vector form}, \eqref{eq: u=f(s)}, and \eqref{eq: learnable controller} asymptotically converges to the equilibrium in $E$. 
\end{thm}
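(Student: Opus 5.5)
We use a LaSalle argument with a Lyapunov function built on the primal-dual structure of \eqref{eq: closed-loop system in primal-dual}. The preconditioning in the $s$-coordinates is handled by a Bregman-type term. Fix any equilibrium $(\tilde\theta^*,\omega^*,s^*,\lambda^*,\tilde\phi^*)\in E$. By Theorem~\ref{thm: closed-loop equilibrium} and Lemma~\ref{lem: equivalence of two optimization problem} it satisfies
\[
\omega^*=0,\qquad \nabla F(f(s^*))+\lambda^*=0,\qquad f(s^*)=p+CB\tilde\theta^*=p+CB\tilde\phi^*,\qquad C^T\lambda^*=0 .
\]
Let $u^*=f(s^*)$. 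Candidate Lyapunov function:
\[
V=\tfrac12(\tilde\theta-\tilde\theta^*)^TB(\tilde\theta-\tilde\theta^*)+\tfrac12\omega^TM\omega+\int_{s^*}^{s}\big(f(\sigma)-u^*\big)^T d\sigma+\tfrac12(\lambda-\lambda^*)^T(\Gamma^\lambda)^{-1}(\lambda-\lambda^*)+\tfrac12(\tilde\phi-\tilde\phi^*)^TB(\Gamma^{\tilde\phi})^{-1}B(\tilde\phi-\tilde\phi^*) .
\]
The integral is separable, $\sum_i\int_{s_i^*}^{s_i}(f_i(\sigma)-f_i(s_i^*))d\sigma$. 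By Assumption~\ref{ass: f} (strictly increasing) it is positive definite in $s-s^*$. Its gradient is $f(s)-u^*$, and this is exactly what compensates the $(J_f)^{-1}$ preconditioner.

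\textbf{Derivative.} Differentiating along trajectories and substituting the equilibrium relations gives:
\begin{itemize}
\item from $\tilde\theta$: $(\tilde\theta-\tilde\theta^*)^TBC^T\omega$;
\item from $\omega$: $\omega^T(f(s)-u^*-D\omega-CB(\tilde\theta-\tilde\theta^*))$;
\item from $s$: $-(f(s)-u^*)^T(\nabla F(f(s))-\nabla F(u^*)+\omega+\lambda-\lambda^*)$;
\item from $\lambda$: $(\lambda-\lambda^*)^T(f(s)-u^*-CB(\tilde\phi-\tilde\phi^*))$;
\item from $\tilde\phi$: $(\tilde\phi-\tilde\phi^*)^TBC^T(\lambda-\lambda^*)$.
\end{itemize}
The cross terms cancel pairwise, leaving
\[
\dot V=-\omega^TD\omega-(f(s)-u^*)^T\big(\nabla F(f(s))-\nabla F(u^*)\big)\le 0 ,
\]
by strict convexity of $F$. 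Since $V$ is radially unbounded in every coordinate, trajectories are bounded. The right-hand side is locally Lipschitz ($\nabla F$ is $C^1$, $f$ is Lipschitz), so solutions exist for all time.

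\textbf{LaSalle step.} This is the main obstacle. On $\{\dot V=0\}$ we have $\omega\equiv 0$ and $f(s)\equiv u^*$, hence $s\equiv s^*$ by strict monotonicity. Then $\dot\omega=0$ gives $CB\tilde\theta=u^*-p$, so $\tilde\theta$ lies in a fixed affine set, and $\dot{\tilde\theta}=C^T\omega=0$ makes $\tilde\theta$ constant. Next, $\dot s=0$ gives $\lambda\equiv-\nabla F(u^*)=\lambda^*$, so $\dot{\tilde\phi}=0$. Finally, $\dot\lambda=0$ forces $f(s)-p-CB\tilde\phi=0$. Hence the largest invariant set is contained in $E$.

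\textbf{Convergence to a single point.} $E$ need not be a singleton when $C$ has a nontrivial kernel, e.g. $\tilde\phi^*+\ker(CB)$ on meshed graphs. We handle this by the standard argument (cf.~\cite{you2021tac}). Any $\omega$-limit point $\bar x$ is itself an equilibrium, so we may take $V$ centered at $\bar x$. This $V$ is nonincreasing along the trajectory and comes arbitrarily close to $0$ along a subsequence. Therefore $V\to0$, and the trajectory converges to $\bar x\in E$.
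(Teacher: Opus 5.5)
Your argument follows the paper's route. Your $\int_{s^*}^{s}(f(\sigma)-u^*)^Td\sigma$ is exactly its Bregman term $U(s)-U(s^*)-[\nabla U(s^*)]^T(s-s^*)$ with $U(s)=\sum_i\int_0^{s_i}f_i(\xi)\,d\xi$. The quadratic part has the same structure, and you reach the same $\dot V$ and the same LaSalle conclusion.

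There is one concrete error: the weight on $\tilde\phi$. Since $\dot{\tilde\phi}=\Gamma^{\tilde\phi}BC^T\lambda$ and all these matrices are diagonal (hence commute), your term $\tfrac12(\tilde\phi-\tilde\phi^*)^TB(\Gamma^{\tilde\phi})^{-1}B(\tilde\phi-\tilde\phi^*)$ differentiates to $(\tilde\phi-\tilde\phi^*)^TB^3C^T(\lambda-\lambda^*)$. It does not give the $(\tilde\phi-\tilde\phi^*)^TBC^T(\lambda-\lambda^*)$ in your list. It therefore cancels the $-(\lambda-\lambda^*)^TCB(\tilde\phi-\tilde\phi^*)$ from the $\lambda$-block only if $B=I$. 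Otherwise it leaves the sign-indefinite residual $(\tilde\phi-\tilde\phi^*)^T(B^3-B)C^T(\lambda-\lambda^*)$, so $\dot V\le 0$ fails as written. The correct weight, which the paper uses, is $(\Gamma^{\tilde\phi})^{-1}$, i.e.\ the term $\tfrac12(\tilde\phi-\tilde\phi^*)^T(\Gamma^{\tilde\phi})^{-1}(\tilde\phi-\tilde\phi^*)$. With it, your bullet list is exactly right and everything downstream goes through.

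Two deviations from the paper are worth keeping.

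\textbf{LaSalle step.} You obtain $\lambda\equiv\lambda^*$ from $\dot s=0$ and then $\dot{\tilde\phi}=\Gamma^{\tilde\phi}BC^T\lambda^*=0$ directly. The paper instead argues that $CB\tilde\phi$ is constant and invokes the weighted Laplacian $CB\Gamma^{\tilde\phi}BC^T$. Same conclusion, shorter route.

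\textbf{Convergence to a point.} The paper finishes by asserting that $E$ is a single point. This implicitly takes $\tilde\theta,\tilde\phi\in\operatorname{range}(C^T)$, since its uniqueness proofs write $\tilde\theta=C^T\theta$ and $\tilde\phi=C^T\phi$. For arbitrary initial data in $\mathbb{R}^{3|\mathcal{V}|+2|\mathcal{E}|}$ on a meshed graph, $E=(\tilde\theta^*+\ker(CB))\times\{0\}\times\{s^*\}\times\{\lambda^*\}\times(\tilde\phi^*+\ker(CB))$. Convergence to $E$ alone then does not give convergence to a point, and your recentering at an $\omega$-limit point is what closes this.

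That recentering needs the $\dot V\le 0$ computation to hold with every point of $E$ as center. So derive your listed relations directly from $\dot x=0$, not from the uniqueness lemmas, whose uniqueness claims do not hold on the full state space when $E$ is not a singleton. Concretely:
\begin{enumerate}
\item $C^T\omega^*=0$ and $C^T\lambda^*=0$ put $\omega^*,\lambda^*\in\operatorname{span}\{1_n\}$.
\item Subtracting the bus-sums of the $\omega$- and $\lambda$-equations gives $1_n^TD\omega^*=0$, hence $\omega^*=0$.
\item The remaining relations then follow from $\dot s=0$, $\dot\omega=0$ and $\dot\lambda=0$.
\end{enumerate}
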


%To prove the asymptotic convergence, by following \cite{feijer2010stability}, we consider the candidate Lyapunov function
%\begin{equation}
    %V(z, s) \!= \!\frac{1}{2} (z\! - \!z^*)^T \Gamma (z \!- \!z^*)+ \frac{1}{2} (f(s) \!- \!f(s^*))^T (f(s)\! - \!f(s^*)) 
%\end{equation}
%with $z := (\tilde\theta, \omega, \lambda, \tilde\phi)$. $(z^*,s^*) = (\tilde\theta^*, \omega^*, s^*, \lambda^*, \tilde\phi^*)$ is the equilibium. 
% $\Gamma := \operatorname*{diag}(B^{-1}, M^{-1}, \Gamma^\lambda, \Gamma^{\tilde\phi})$ 
%\textcolor{red}{$\Gamma := \operatorname*{diag}(B, M, (\Gamma^\lambda)^{-1}, (\Gamma^{\tilde\phi})^{-1})$} is a diagonal positive definite constant matrix. The proof is given in three steps: 
%
%\begin{enumerate}
%    \item $V(z_s)\geq0$ and equality holds if and only if $z=z^*$ and $s=s^*$.

%    \item $\dot{V}(z_s)\leq 0$ and the trajectory is bounded.

%    \item With LaSalle's invariance principle, any trajectory starting from the largest invariant subset $E^*$ of $E$ will asymptotically converge to $E^*$.

%\end{enumerate}

%Up to this point, we conclude that the proposed learnable controller~\eqref{eq: u=f(s)} and \eqref{eq: learnable controller} asymptotically stabilize the system~\eqref{eq: swing dynamics in vector form} to an equilibrium, which is the optimal solution to problem~\eqref{eq: opt-ss}. Thus, the optimal steady-state objective is achieved.

\begin{rem}[Change of variables]
    With theorem~\ref{thm: closed-loop equilibrium} and \ref{thm: asymptotic stability}, we have shown that the nonlinear change of variables preserves both asymptotic stability and steady-state economic optimality, provided that the transformation is strictly monotone. More importantly, it is this nonlinear change of variables that embeds flexibility for learning within the primal-dual framework systematically. 
    Furthermore, this interpretation reveals the potential to extend the learning-augmented control design towards a broader range of optimization objectives, including, for instance, the operational constraints.
%address a broader range of optimization objectives, even considering operational constraints. As long as it can be properly modeled as an optimization problem, the controllers with flexibility for learning can be systematically designed.
\end{rem}

%% file: Neural_Network.tex
We have proposed a class of controllers that achieve asymptotic stability with economically optimal steady-state operation. Within this class, we can search for the best controller under transient metrics of interest by parameterizing the change of variables $u=f(s)$ as a monotone neural network and training it through reinforcement learning. In this section, we first design a new transient performance metric that provides theoretical guarantee for learning, followed by neural network training.
%Our overall goal is to design a controller that achieves the optimal steady-state objectives with improved transient performance. To this end, we have structured the controller to inherently achieve the optimal steady-state objectives and embed flexibility for learning. The next step is therefore to design appropriate metrics for transient performance of interest and guide the controller toward improved transient behavior through learning.
% This section first presents the construction of a strictly monotone neural network, followed by discretization of the closed-loop system and neural network training.

\subsection{Transient Performance Metrics}\label{sec: transient performance metrics}

For transient performance, we first focus on the convergence rate. Theorem~\ref{thm: asymptotic stability} only guarantees that the system converges to the equilibrium as $t \to \infty$, but it cannot guarantee the convergence rate.
Therefore, we introduce a metric to explicitly quantify the transient decay behavior
\begin{equation}\label{eq: convergence rate}
        R_{i,T}^\alpha :=\int _{0}^T e^{\alpha t}|\omega_i(t)|^2 dt\,,
    \end{equation}
where $\alpha >0$ denotes the exponential rate, reflecting the theoretical convergence speed of system frequency.
\begin{thm}[Potentially exponential stability]\label{thm: exponential convergence}
    Starting from any initial point in $\mathbb{R}^{3|\mathcal{V}|+2|\mathcal{E}|}$, suppose $T \to \infty$ and $R^\alpha_T := \sum_{i=1}^n R_{i,T}^\alpha$ converges during training. Then the learned controller will exponentially stabilize the frequency deviation $\omega$ with rate at least $\alpha/2$.
\end{thm}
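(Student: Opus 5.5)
The plan is to read the hypothesis of Theorem~\ref{thm: exponential convergence} as a uniform bound: $R^\alpha_\infty := \lim_{T\to\infty}\sum_{i=1}^n\int_0^T e^{\alpha t}|\omega_i(t)|^2dt$ exists and is finite, say $R^\alpha_\infty \le K$. Here $K$ depends on the initial condition and on the learned $f$. Since the integrand is nonnegative, convergence of the integrals as $T\to\infty$ is the same as $e^{\alpha t}\|\omega(t)\|^2 \in L^1[0,\infty)$. By itself this gives only an integral decay statement. Pointwise decay does not follow automatically, because an integrable function can still have tall, thin spikes. The main step is therefore to turn $L^1$ boundedness of $e^{\alpha t}\|\omega\|^2$ into a pointwise bound $\|\omega(t)\|\le c\,e^{-\alpha t/2}$.

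To get this, I will bound the time derivative of $g(t):=e^{\alpha t}\|\omega(t)\|^2$ and use the standard Barbalat-type estimate. Compute
\[
\dot g = \alpha g + 2e^{\alpha t}\omega^T\dot\omega,
\]
with $\dot\omega = M^{-1}\big(f(s)-p-D\omega-CB\tilde\theta\big)$. By Theorem~\ref{thm: asymptotic stability}, specifically the Lyapunov function used in its proof (a weighted quadratic in $(\tilde\theta,\omega,\lambda,\tilde\phi)$ plus a term in $f(s)-f(s^*)$), trajectories are bounded. Because $f$ is Lipschitz (Assumption~\ref{ass: f}), $\dot\omega$ is then bounded by some constant $L_0$. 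Using $|\omega^T\dot\omega|\le \|\omega\|L_0$, we obtain
\[
\dot g \le \alpha g + 2L_0 e^{\alpha t/2}\sqrt{g}.
\]
An additive bound of this form is not uniform in $t$, so I will instead work with $h(t):=e^{\alpha t/2}\|\omega(t)\|=\sqrt{g}$, which satisfies $\dot h \le \tfrac{\alpha}{2}h + e^{\alpha t/2}L_0$. This still grows, so boundedness of $\dot\omega$ alone is too crude.

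The better route is a refined estimate. Write $\dot\omega = M^{-1}\big(-D\omega + r(t)\big)$ with $r := f(s)-p-CB\tilde\theta$. Then
\[
\tfrac{d}{dt}\|\omega\|_M^2 \le -2\underline{d}\,\|\omega\|^2 + 2\|\omega\|\|r\|,
\]
where $\underline{d}$ is the smallest damping $D_i$. Taking the integral of $e^{\alpha t}\|\omega\|^2$ over a window $[t-1,t]$, together with a local Grönwall bound on $\|\omega\|_M^2$ across that window, gives
\[
e^{\alpha t}\|\omega(t)\|^2 \le c_1\int_{t-1}^{t} e^{\alpha \tau}\|\omega(\tau)\|^2d\tau + c_2\, e^{\alpha t}\sup_{[t-1,t]}\|r\|\,\|\omega\|.
\]
The first term is at most $c_1K$.

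I expect the second term to be the main obstacle. It requires showing that the forcing $r$ itself decays like $e^{-\alpha t/2}$. My plan is to argue that $r = M\dot\omega + D\omega$, and to control $r$ through its own dynamics. Its derivative $\dot r = J_f(s)\dot s - CBC^T\omega$ is linear in the already bounded states, so I can iterate the window argument. If this iteration does not close cleanly, I will fall back on the paper's intended reading. In that reading, finiteness of $R^\alpha_\infty$ is combined with boundedness of $\dot\omega$ along the bounded trajectory. This yields $g(t)\le c$ uniformly: if $g(t_0)$ were large, then $g$ would stay comparably large on a window of fixed length around $t_0$, because its logarithmic derivative is bounded once $g$ is bounded away from zero. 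That would contradict integrability with the bound $K$. From $g(t)\le c$ we get $\|\omega(t)\|\le \sqrt{c}\,e^{-\alpha t/2}$, which is the claimed exponential stabilization of $\omega$ at rate at least $\alpha/2$.
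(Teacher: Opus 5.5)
There is a genuine gap, and it is in the step your whole argument rests on: the fallback.

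\textbf{The fallback fails.} You claim that $\log g$ has bounded derivative once $g$ is bounded away from zero. This is false. We have $\tfrac{d}{dt}\log g=\alpha+2\omega^T\dot\omega/\|\omega\|^2$, and $g\ge c$ only gives $\|\omega\|\ge\sqrt{c}\,e^{-\alpha t/2}$. With $\|\dot\omega\|\le L_0$ this yields only $|\tfrac{d}{dt}\log g|\le\alpha+2L_0e^{\alpha t/2}/\sqrt{c}$, which is unbounded in $t$.

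Run honestly, the spike argument goes as follows. If $\|\omega(t_0)\|=h$, then $\|\omega\|\ge h/2$ on $[t_0-h/(2L_0),t_0]$. Hence $K\ge \tfrac{h^3}{8L_0}e^{\alpha(t_0-h/(2L_0))}$, and since $h\le\sup_t\|\omega(t)\|<\infty$, this gives only $\|\omega(t_0)\|=O(e^{-\alpha t_0/3})$. That is rate $\alpha/3$, not $\alpha/2$.

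This fallback is essentially the paper's own proof. The paper applies Barbalat's lemma to $e^{\alpha t}|\omega_i|^2$, citing continuity of the dynamics and Lipschitzness of $f$. Barbalat, however, requires uniform continuity of $e^{\alpha t}|\omega_i|^2$. Its derivative $\alpha e^{\alpha t}\omega_i^2+2e^{\alpha t}\omega_i\dot\omega_i$ is not controlled by a bound on $\dot\omega$; this is exactly the non-uniformity you yourself flagged earlier in the proposal. So deferring to ``the paper's intended reading'' inherits the gap rather than closing it.

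\textbf{What the refined route needs.} Your refined route does isolate the real missing ingredient. Once $\|r(t)\|=O(e^{-\alpha t/2})$ is known, your window inequality closes by a simple bootstrap on $\sup_{[0,T]}g$.

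Part of this comes for free. Using $f(s^*)-p-CB\tilde\theta^*=0$, you can write $r=(f(s)-f(s^*))-CB(\tilde\theta-\tilde\theta^*)$. Since $\dot{\tilde\theta}=C^T\omega$ and $\tilde\theta(t)\to\tilde\theta^*$, Cauchy--Schwarz gives $\|\tilde\theta(t)-\tilde\theta^*\|\le\|C\|\int_t^\infty\|\omega\|\,d\tau\le\|C\|\sqrt{K/\alpha}\,e^{-\alpha t/2}$.

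The other part, $f(s)-f(s^*)$, is not available. It is driven through $\dot s=-[\nabla F(f(s))+\omega+\lambda]$ by the integral states $\lambda$ and $\tilde\phi$, and the hypothesis on $R^\alpha_T$ says nothing about these. Iterating through $\dot r=J_f(s)\dot s-CBC^T\omega$ would need exponential decay of $\lambda+\nabla F(f(s))$. That is exponential convergence of the controller states, which is more than the theorem asserts. With only bounded trajectories, $\dot r$ is merely bounded and you are back at the crude estimate.

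As written, then, your argument (like the paper's Barbalat step) establishes at most $\|\omega(t)\|=O(e^{-\alpha t/3})$. Rate $\alpha/2$ needs an additional decay estimate on $\dot\omega$, equivalently on $f(s)-f(s^*)$, which neither the hypothesis nor your proposal supplies.
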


Theorem~\ref{thm: exponential convergence} provides a theoretical guarantee that the controller drives the system to the equilibrium with an exponential response, under the assumption of ideal condition and learnability, i.e., $T \to \infty$ and $R_T^\alpha$ converges. Although such an ideal condition cannot be fully achieved in practice, the metric~\eqref{eq: convergence rate} still accelerates convergence and improves the controller’s response rate.
%From the learning perspective, the designed metric~\eqref{eq: convergence rate} plays a similar role in enhancing the controller’s ability to stabilize the system more rapidly. This further illustrates the motivation and significance of interpreting the controller from a primal-dual perspective.

Besides the convergence rate, we are also concerned with the maximum frequency deviation (frequency nadir)
\begin{equation}
        \|\omega_i\|_\infty := \max_{t\geq0}|\omega_i(t)|
 \end{equation}
and accumulated control effort over the transient period

\begin{equation}
        \bar C_{i,T}:=\frac{1}{T}\int_0^TF_i(u_i)dt\,.
\end{equation}

Finally, we define the following loss function to evaluate the transient performance of the controller:
\begin{equation}
    J := \sum_{i=1}^n(\rho_rR_{i,T}^\alpha + \rho_n\|\omega_i\|_\infty+\rho_c \bar C_{i,T})\,,
\end{equation}
where $\rho_r, \rho_n, \rho_c>0$ are tunable weights that balance the three transient metrics.

\subsection{Strictly Monotone Neural Network Training}\label{sec: NN training}
According to~\cite[Theorem~2]{Cui2023tps} and universal approximation property~\cite{lu2017expressive}, any strictly increasing function can be parameterized by a single-hidden-layer neural network with ReLU activation. To parameterize the strictly monotone function $u=f(s)$, we employ the stacked ReLU monotonic neural network in~\cite{jiang2022ojcsys}. By properly designing the relationship between the trainable parameters, the piece-wise linear network is guaranteed to be strictly monotone~\cite[Lemma 5]{Cui2023tps}.

To emulate the closed-loop dynamics, we adopt the recurrent neural network (RNN) structure. Specifically, we follow~\cite{jiang2022ojcsys} to discretize \eqref{eq: swing dynamics in vector form}, \eqref{eq: u=f(s)}, and \eqref{eq: learnable controller} with a time step $h$. 

In training procedure, one batch contains $\lfloor T/h \rfloor$ RNN cells. Each cell receives $(\tilde\theta^{\langle\tau\rangle},\omega^{\langle\tau\rangle},s^{\langle\tau\rangle},\lambda^{\langle\tau\rangle},\tilde\phi^{\langle\tau\rangle})$ as input and outputs $(\tilde\theta^{\langle\tau+1\rangle},\omega^{\langle\tau+1\rangle},s^{\langle\tau+1\rangle},\lambda^{\langle\tau+1\rangle},\tilde\phi^{\langle\tau+1\rangle})$ to the next cell with $\tau=0, 1, \dots, \lfloor T/h \rfloor$. Over $|\mathcal{B}|$ batches, we compute the loss 
\begin{equation}
    J(\Theta_i(d)):=\frac{1}{|\mathcal{B}|}\sum_{b=1}^{|\mathcal{B}|}\sum_{i=1}^n(\rho_r\hat{R}_{i,T}^\alpha+\rho_n\|\omega_i^{\langle T\rangle}\|_\infty+\rho_c\hat{\bar{C}}_{i,T})
\end{equation}
and update the trainable parameter set $\Theta_i(d)$ via gradient descent. One such update constitutes an epoch, and the training proceeds for $N$ epochs until $J(\Theta_i(d))$ converges.

%% file: Numerical_Results.tex
In this section, we conduct simulations on the IEEE 39-bus New England test system \cite{athay1979practical}, which comprises 39 buses interconnected by 46 transmission lines. Among them, 10 buses (indexed by $i \in \{30, \dots, 39 \}$) host generators, each with one generator per bus.
The system parameters, except $D$, are adopted from the Power System Toolbox~\cite{chow1992toolbox}, and the droop coefficients are set as $D_i = 150$ p.u. for $i \in \{30, \dots, 39\}$ and $D_i = 100$ p.u. for $i \in \{1, \dots, 29\}$. The generation cost functions are specified as $F_i(u_i) = \tfrac{1}{4}c_i u_i^4 + b_i$, $\forall i \in \{30, \dots, 39 \}$, where the coefficients $c_i$ and $b_i$ are randomly drawn from $(0,1)$. 
In what follows, we first present the training results of the monotone neural network used to parameterize the function $u_i = f_i(s_i)$. We then present the performance of the trained controller in comparison with a traditional linear primal-dual based controller.
%We apply a step change of $p_i=3 \text{ p.u.}$ in uncontrollable load at bus $i \in\{14, 22, 28\}$, respectively.
\subsection{Training of Monotone Neural Network}
To implement the training illustrated in Section~\ref{sec: NN training}, each trajectory sample is obtained by applying random step changes $p$ to the discretized system model. For each bus $i$, the disturbance $p_i$ is independently drawn from a uniform distribution over $(-5, 5)$ p.u.. The hyperparameters of neural network with 20 hidden neurons are set as follows: $|\mathcal{B}| = 64$, $N = 50$, $\alpha = 3$, $\rho_r = 0.1$, $\rho_n = 1$, and $\rho_c = 1$. The initial learning rate is $0.4$, which decays exponentially by a factor of $0.5$ every $3$ steps. The time series length is $6000$, and the optimizer used is Adam.
\begin{figure}[htbp]
  \centering  
  \includegraphics[width=0.99\linewidth]{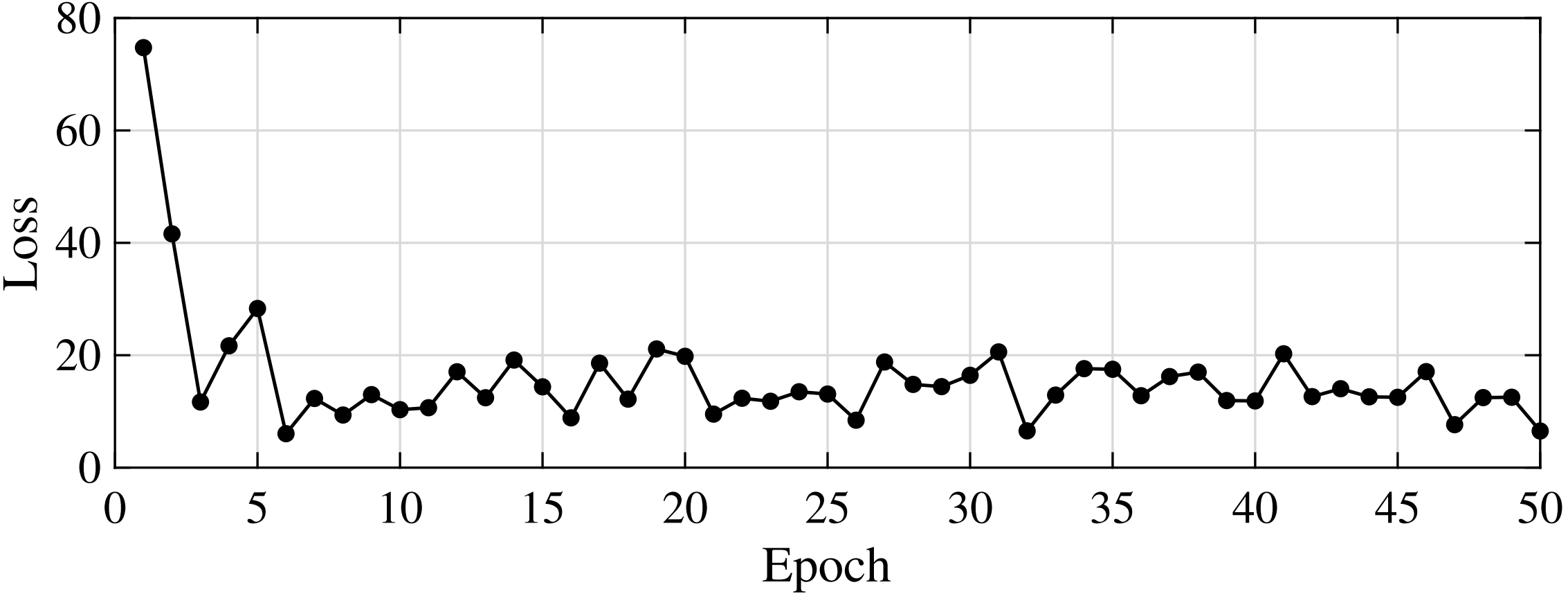} 
  \caption{Loss over epochs in training}  
  \label{fig: loss}  
\end{figure}

The training is conducted with TensorFlow 2.20.0 on an A800 GPU, which takes approximately 16,320 seconds. The evolution of the loss during training is shown in Fig.~\ref{fig: loss} that the loss $J$ is reduced by approximately $86.7\%$.

\subsection{Performance of the Learned Controller}
We now evaluate the performance of the trained controller and compare it with the traditional primal-dual based linear controller without learning. Disturbance $p_i = 3$ p.u. is applied on bus $i \in \{14, 22, 28, 36, 38\}$ at $t = 0$. Fig.~\ref{fig: frequency with controllers} shows the frequency responses and convergence on generator buses $i \in \{30, \dots, 39\}$, under both controller. Fig.~\ref{fig: cost with controllers} compares the control effort of both controllers during transient.
According to the results, the learned controller achieves significant improvements in convergence speed, with smaller frequency nadir and reduced control effort during transients. This advantage is further confirmed by the numerical comparison provided in Table~\ref{tab: transient performance comparison}.

\begin{figure}[htbp]
  \centering  
  \includegraphics[width=0.99\linewidth]{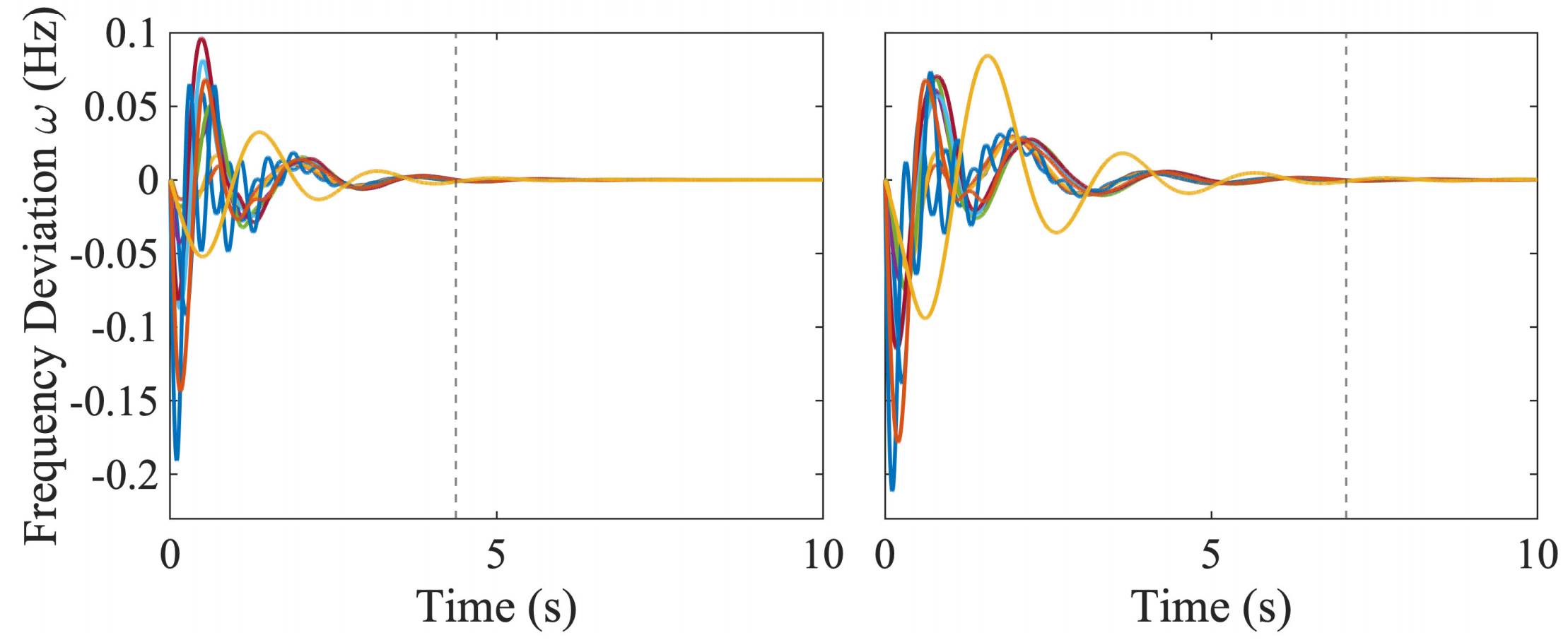} 
  \caption{Frequency deviation of generator buses under a disturbance $p_i = 3$ p.u. on buses $\{14,22,28,36,38\}$: learned controller (left); traditional primal-dual based controller (right).}
  \label{fig: frequency with controllers}
\end{figure}

\begin{figure}[htbp]
  \centering
    \centering  
  \includegraphics[width=0.99\linewidth]{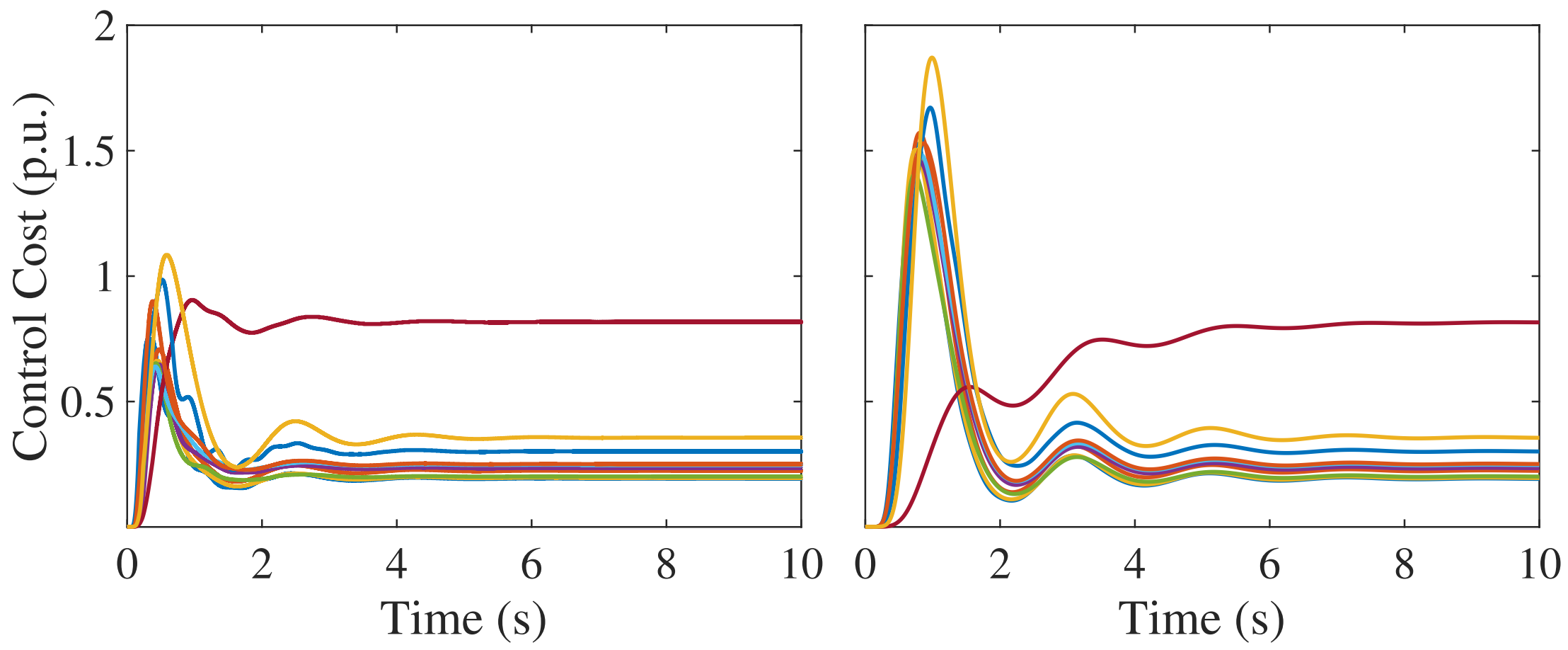} 
  \caption{Cost of generators under a disturbance $p_i = 3$ p.u. on buses $\{14,22,28,36,38\}$: learned controller (left); traditional primal-dual based controller (right).}
  \label{fig: cost with controllers}
\end{figure}

\begin{table}
    \centering
    \renewcommand{\arraystretch}{1.8} % 调整表格行高
    \begin{tabular}{>{\centering\arraybackslash}p{0.3\linewidth}|>{\centering\arraybackslash}p{0.16\linewidth}|>{\centering\arraybackslash}p{0.16\linewidth}|>{\centering\arraybackslash}p{0.16\linewidth}}
    \hline\hline  % 顶部双线
 & rate (s)& nadir&cost\\\hline
         learned controller&  \textbf{4.3825}&            \textbf{0.1680}& \textbf{123.3}\\\hline
         linear controller&  7.0695&  0.1796&133.8\\\hline \hline
    \end{tabular}
    \caption{Transient performance comparison of the controllers.}
    \label{tab: transient performance comparison}
\end{table}

To demonstrate that the learned controller achieves economic optimality at steady state, we plot the marginal cost of each generator in Fig.~\ref{fig: marginal cost}. In unconstrained optimization, identical marginal costs across generators indicate steady-state optimality~\cite[Chapter 5.5]{Boyd2004convex}. As shown in Fig.~\ref{fig: marginal cost}, the baseline primal–dual controller (right) inherently guarantees economic optimality. Our proposed controller (left) preserves economic optimality while incorporating a nonlinear learnable component, as evidenced by the identical marginal costs.

\begin{figure}[htbp]
  \centering  
  \includegraphics[width=0.99\linewidth]{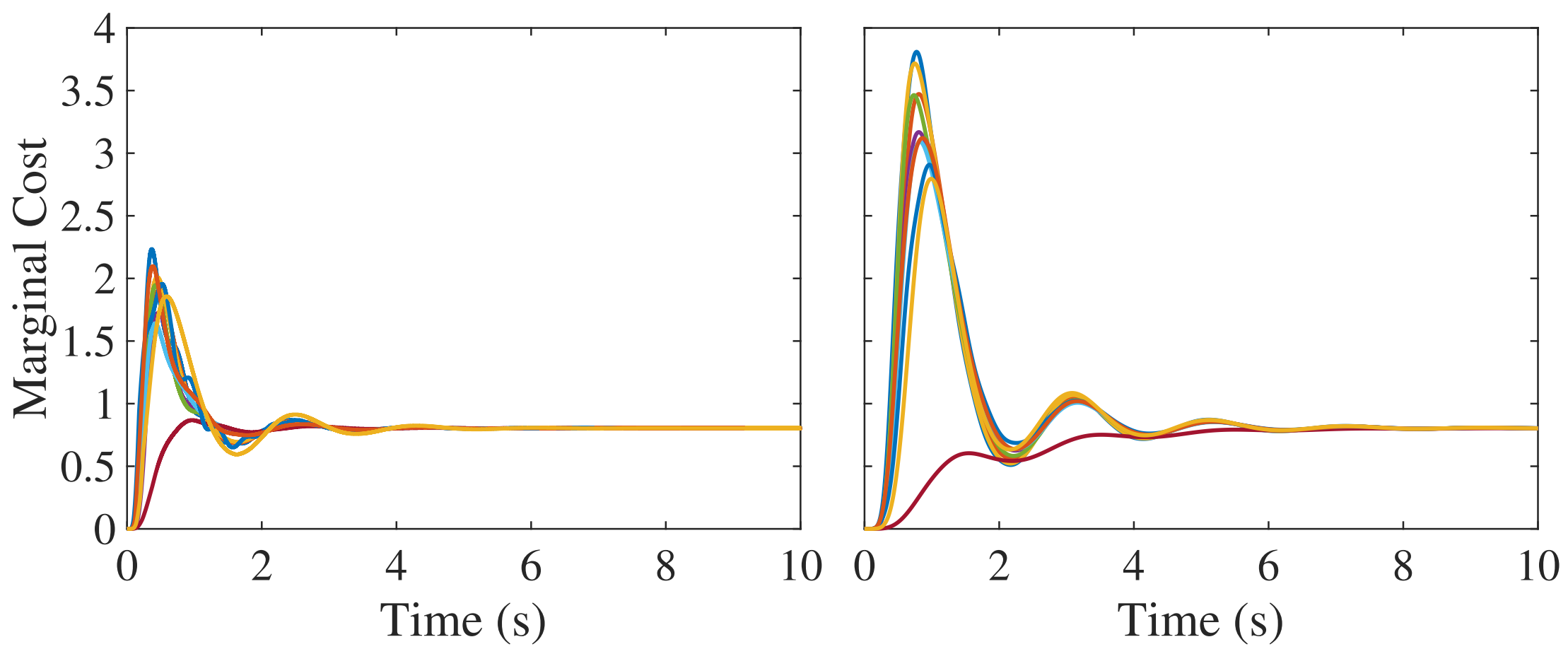} 
  \caption{Identical marginal cost among generators}  
  \label{fig: marginal cost}  
\end{figure}

%% file: Conclusion.tex
In this work, we propose a controller that simultaneously achieves secondary frequency regulation, steady-state economic optimality, and improved transient performance. By extending the primal–dual framework, we reinterpret the closed-loop system from an optimization perspective, which offers a unified framework for integrating learning flexibility into controller design. This systematic approach also enables addressing broader objectives, including operational constraints. 
To guarantee effectiveness of learning, we further propose a metric for convergence rate that provides theoretical assurance of potentially exponential convergence of the system under the learned controller. Experiments on the IEEE 39-bus system show that the controller achieves faster convergence, smaller frequency nadir and less accumulated control cost during transients. These results validate that our approach effectively integrates stability, optimality, and learnability in a unified control design.

%% file: Appendix.tex
\subsection{Proof of Lemma~\ref{lem: unique global optimal}}
We first construct a new optimization problem
\begin{subequations}\label{eq: general optimization problem}
\begin{align}
    \min_{\tilde\theta, \omega, u, \tilde\phi} &\quad F(u)+\frac{1}{2}\omega^TD\omega \label{eq: general objective function}\\
    {\operatorname*{s.t.}} &\quad u-p-D\omega-CB\tilde\theta=0 \label{eq: swing dynamics at equilibrium}\\
    &\quad u-p-CB\tilde\phi=0 \label{eq: power balance at equilibrium}
\end{align}
\end{subequations}
and give the following lemma.
\begin{lem}\label{lem: uniqueness of optimality of general optimization problem}
    If $(\tilde\theta^*, \omega^*, u^*, \tilde\phi^*)$ is a global optimal solution to \eqref{eq: general optimization problem}, then
    \begin{enumerate}
        \item $\omega^*=0$;\label{item: omega^*=0}

        \item $(\tilde\theta^*, \omega^*, u^*, \tilde\phi^*)$ is unique.\label{item: uniqueness of x_u^*}
    \end{enumerate}
\end{lem}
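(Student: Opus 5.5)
The plan is to treat \eqref{eq: general optimization problem} as a convex program. Its constraints \eqref{eq: swing dynamics at equilibrium}--\eqref{eq: power balance at equilibrium} are affine in $(\tilde\theta,\omega,u,\tilde\phi)$. Its objective $F(u)+\frac12\omega^TD\omega$ is convex, and it is strictly convex in the pair $(u,\omega)$, because $F$ is strictly convex and $D\succ 0$. The first item follows from a direct feasibility argument, and the second from strict convexity combined with a structural property of the incidence matrix.

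For item~\ref{item: omega^*=0}, take any feasible point $(\tilde\theta,\omega,u,\tilde\phi)$. Subtracting \eqref{eq: power balance at equilibrium} from \eqref{eq: swing dynamics at equilibrium} gives $D\omega = CB(\tilde\phi-\tilde\theta)$. Now consider the modified point $(\tilde\phi,0,u,\tilde\phi)$, i.e., replace $\tilde\theta$ by $\tilde\phi$ and $\omega$ by $0$. Both constraints then reduce to $u-p-CB\tilde\phi=0$, which holds, so the modified point is feasible. Its objective value is $F(u)\le F(u)+\frac12\omega^TD\omega$, with strict inequality whenever $\omega\neq 0$, since $D\succ0$. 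Hence no point with $\omega\neq0$ can be globally optimal, and therefore $\omega^*=0$.

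For item~\ref{item: uniqueness of x_u^*}, suppose $x^1=(\tilde\theta^1,\omega^1,u^1,\tilde\phi^1)$ and $x^2=(\tilde\theta^2,\omega^2,u^2,\tilde\phi^2)$ are both optimal.
\begin{enumerate}
\item Their midpoint is feasible because the constraints are affine.
\item If $u^1\neq u^2$, strict convexity of $F$ makes the midpoint's objective strictly smaller than the common optimal value, a contradiction. Hence $u^1=u^2=:u^*$.
\item Item~\ref{item: omega^*=0} already gives $\omega^1=\omega^2=0$.
\item It remains to show that $\tilde\theta$ and $\tilde\phi$ are determined. With $\omega^*=0$, both constraints read $CB\tilde\theta = CB\tilde\phi = u^*-p$.
\end{enumerate}

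This last step is the main obstacle. If the graph has cycles, $|\mathcal{E}|>n-1$, so $CB$ has a nontrivial kernel. Uniqueness then fails if $\tilde\theta$ is allowed to range over all of $\mathbb{R}^{|\mathcal{E}|}$. I would resolve this by using that $\tilde\theta$ and $\tilde\phi$ are phase-angle differences, so they lie in $\operatorname{range}(C^T)$. This is the meaning of $\tilde\theta := C^T\theta$, and the same holds for the virtual $\tilde\phi$ on the relevant invariant set.

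Writing $\tilde\theta = C^T\theta$, the constraint becomes $CBC^T\theta = u^*-p$. The matrix $CBC^T$ is a weighted Laplacian of a connected graph, with kernel $\operatorname{span}(\mathbf{1})$. Therefore $\theta$ is determined up to adding a multiple of $\mathbf{1}$, and $C^T\mathbf{1}=0$ makes $\tilde\theta=C^T\theta$ unique. The same argument applies to $\tilde\phi$.

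Equivalently, $\ker(CB)\cap\operatorname{range}(C^T)=\{0\}$. To see this, let $C^T\theta\in\ker(CB)$. Then $\theta^TCBC^T\theta=0$, and since $B\succ0$ this forces $C^T\theta=0$. This completes uniqueness of $(\tilde\theta^*,\omega^*,u^*,\tilde\phi^*)$.
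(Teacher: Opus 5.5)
Your proof is correct and follows the paper's argument essentially step by step. It uses the same feasible modification $(\tilde\phi^*,0,u^*,\tilde\phi^*)$ to force $\omega^*=0$, the midpoint/strict-convexity argument for $u^*$, and the Laplacian-kernel argument via $\tilde\theta=C^T\theta$, $\tilde\phi=C^T\phi$ for the angle variables.

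You are more explicit than the paper that uniqueness of $\tilde\theta^*,\tilde\phi^*$ on meshed graphs requires restricting them to $\operatorname{range}(C^T)$. For $\tilde\phi$, however, this should be stated as a standing modeling assumption (as the paper does implicitly), not justified by invariance. The reason is that $\Gamma^{\tilde\phi}BC^T\lambda$ need not lie in $\operatorname{range}(C^T)$ when the diagonal entries of $\Gamma^{\tilde\phi}B$ differ along a cycle, so the $\tilde\phi$ dynamics do not keep $\tilde\phi$ in that subspace in general.
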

\begin{proof}
    To prove \ref{item: omega^*=0}), we suppose that $(\tilde\theta^*, \omega^*, u^*, \tilde\phi^*)$ is optimal but $\omega^* \neq 0$. Since $(\tilde\theta^*, \omega^*, u^*, \tilde\phi^*)$ is feasible, then we have $u^*-p=CB\tilde\phi^*$.
Consider $(\tilde\theta^\sharp, \omega^\sharp, u^*, \tilde\phi^*)$ with $\tilde\theta^\sharp=\tilde\phi^*$ and $\omega^\sharp=0$. It is feasible to \eqref{eq: general optimization problem} and satisfies 
$$F(u^*)+\frac{1}{2}({\omega^{\sharp}})^TD\omega^\sharp < F(u^*)+\frac{1}{2}({\omega^*})^TD\omega^*\,.$$
This is contradictory to optimality of $(\tilde\theta^*, \omega^*, u^*, \tilde\phi^*)$. So we have $\omega^*=0$. 

To prove \ref{item: uniqueness of x_u^*}), we first prove that $u^*$ is unique. Suppose there exists another optimal solution $(\tilde\theta^\star, \omega^*, u^\star, \tilde\phi^\star)$ with $u^\star \neq u^*$ and $F(u^*) = F(u^\star)$. 
% Suppose $u^\star \neq u^*$ such that $(\tilde\theta^*, \omega^*, u^\star, \tilde\phi^*)$ is also optimal. 
Then the convex combination of $(\tilde\theta^*, \omega^*, u^*, \tilde\phi^*)$ and $(\tilde\theta^\star, \omega^*, u^\star, \tilde\phi^\star)$ 
% $(\tilde\theta^*, \omega^*, u^*, \tilde\phi^*)$ and $(\tilde\theta^*, \omega^*, u^\star, \tilde\phi^*)$ 
is feasible, since the constraints are affine. With the strict convexity of $F(u)$, we have $F(\tfrac{1}{2}u^*+\tfrac{1}{2}u^\star )+\tfrac{1}{2}({\omega^*})^TD\omega^* <\tfrac{1}{2}F(u^*)+\tfrac{1}{2}F(u^\star)+ \tfrac{1}{2}({\omega^*})^TD\omega^*=F(u^*) + \tfrac{1}{2}({\omega^*})^TD\omega^*$, 
% $F(u)$ is strictly convex and $\omega^*=0$, then $F(u^*)+\tfrac{1}{2}({\omega^*})^TD\omega^* > F(\tfrac{1}{2}u^*+\tfrac{1}{2}u^\star )+\tfrac{1}{2}({\omega^*})^TD\omega^* $, 
which is contradictory to the optimality of $(\tilde\theta^*, \omega^*, u^*, \tilde\phi^*)$. Hence $u^*$ is unique. 

Since $\omega^*=0$, \eqref{eq: swing dynamics at equilibrium} at optimal can be rewritten as $u^*-p-CB\tilde\theta^*=0$. With $\tilde\theta=C^T\theta$, we obtain $CB\tilde\theta^*=CBC^T\theta^*=L\theta^*$ where $L$ is the Laplacian matrix of the connected undirected graph corresponding to $(\mathcal{V}, \mathcal{E})$. By the feasibility condition, this implies $L\theta^* = u^* - p$. According to graph theory, we have $\operatorname*{ker}(L)=\operatorname*{span}\{ 1_n \}$. Therefore, $\theta^*$ is determined uniquely up to a constant and consequently $\tilde\theta^*=C^T\theta^*$ is unique. By the same argument, $\tilde\phi^*$ is unique. This ends the proof of \ref{item: uniqueness of x_u^*}).
\end{proof}

Now we consider the following lemma to establish the equivalence of \eqref{eq: general optimization problem with precondition} and \eqref{eq: general optimization problem}.
\begin{lem}\label{lem: equivalence by change a variable}
    Suppose Assumption~\ref{ass: f} holds and let $u^*=f(s^*)$, then $(\tilde\theta^*, \omega^*, s^*, \tilde\phi^*)$ is globally optimal to problem~\eqref{eq: general optimization problem with precondition} if and only if $(\tilde\theta^*, \omega^*, u^*, \tilde\phi^*)$ is globally optimal to problem~\eqref{eq: general optimization problem}.
    %\begin{enumerate}
        %\item $(\tilde\theta^*, \omega^*, s^*, \tilde\phi^*)$ is a global optimal solution to problem~\eqref{eq: general optimization problem with precondition};\label{item: 1 in 2}

        %\item $(\tilde\theta^*, \omega^*, u^*, \tilde\phi^*)$ is a global optimal solution to problem~\eqref{eq: general optimization problem};\label{item: 2 in 2}

        %\item there exist unique $(\nu^*, \lambda^*)$ that $(\tilde\theta^*, \omega^*, s^*, \tilde\phi^*, \nu^*, \lambda^*)$ satisfies the Karush–Kuhn–Tucker (KKT) conditions of optimization problem~\eqref{eq: general optimization problem with precondition}.\label{item: 3 in 3}
    %\end{enumerate}
\end{lem}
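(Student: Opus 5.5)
The key fact is that, under Assumption~\ref{ass: f}, the map $f$ is a bijection from $\mathbb{R}^n$ onto the range $\mathcal{U}:=f(\mathbb{R}^n)$. Each $f_i$ is strictly increasing, hence injective, and continuous. The change of variables $u=f(s)$ therefore sets up a one-to-one correspondence between the feasible points of \eqref{eq: general optimization problem with precondition} and the feasible points of \eqref{eq: general optimization problem} that satisfy $u\in\mathcal{U}$. Objective values are preserved: $F(f(s))+\frac{1}{2}\omega^TD\omega$ at $(\tilde\theta,\omega,s,\tilde\phi)$ equals $F(u)+\frac{1}{2}\omega^TD\omega$ at $(\tilde\theta,\omega,u,\tilde\phi)$. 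The constraints \eqref{eq: swing dynamics at equilibium for s}--\eqref{eq: power balance at equilibrium for s} are literally \eqref{eq: swing dynamics at equilibrium}--\eqref{eq: power balance at equilibrium} with $u=f(s)$. Both directions of the equivalence follow from this correspondence, except for one point: the range of $f$ need not be all of $\mathbb{R}^n$, since Lipschitz continuity bounds growth but strict monotonicity does not force unboundedness.

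\emph{Direction ($\Leftarrow$).} Suppose $(\tilde\theta^*,\omega^*,u^*,\tilde\phi^*)$ is globally optimal for \eqref{eq: general optimization problem} and $u^*=f(s^*)$. Take any feasible $(\tilde\theta,\omega,s,\tilde\phi)$ of \eqref{eq: general optimization problem with precondition}. Then $(\tilde\theta,\omega,f(s),\tilde\phi)$ is feasible for \eqref{eq: general optimization problem}. Its objective value equals that of $(\tilde\theta,\omega,s,\tilde\phi)$ and is therefore at least the optimal value attained at $(\tilde\theta^*,\omega^*,f(s^*),\tilde\phi^*)$. So $(\tilde\theta^*,\omega^*,s^*,\tilde\phi^*)$ is globally optimal for \eqref{eq: general optimization problem with precondition}. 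This direction uses only that $f$ maps into $\mathbb{R}^n$.

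\emph{Direction ($\Rightarrow$).} Suppose $(\tilde\theta^*,\omega^*,s^*,\tilde\phi^*)$ is optimal for \eqref{eq: general optimization problem with precondition}. The same argument shows that $(\tilde\theta^*,\omega^*,f(s^*),\tilde\phi^*)$ is optimal among feasible points of \eqref{eq: general optimization problem} with $u\in\mathcal{U}$. We then need to remove the restriction $u\in\mathcal{U}$. Each $f_i(\mathbb{R})$ is an open interval containing $0$, so $\mathcal{U}$ is an open convex box containing the origin. The restricted problem is convex: strictly convex objective, affine constraints, convex open domain. Its optimum $u^*$ lies in the interior of $\mathcal{U}$, so it is a local minimizer of the unrestricted convex problem \eqref{eq: general optimization problem}, obtained by intersecting with a small ball inside $\mathcal{U}$. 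By convexity, any local minimizer of a convex problem is a global one. Hence $(\tilde\theta^*,\omega^*,u^*,\tilde\phi^*)$ is globally optimal for \eqref{eq: general optimization problem}.

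\emph{Main obstacle.} The step I expect to need the most care is the range issue in the ($\Rightarrow$) direction. Openness of $\mathcal{U}$ follows from continuity and strict monotonicity of each $f_i$: the image of $\mathbb{R}$ is an open interval. The local-to-global argument then settles it, and it is exactly where convexity of $F$ is used. The ($\Leftarrow$) direction needs no such care.
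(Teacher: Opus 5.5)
Your proof is correct, and it departs from the paper's at exactly the point you flagged.

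The paper asserts that Assumption~\ref{ass: f} makes $f$ a bijection of $\mathbb{R}^n$. It then writes an arbitrary feasible $\hat u$ of \eqref{eq: general optimization problem} as $f(\hat s)$ in the ($\Rightarrow$) direction, and sets $s^*=f^{-1}(u^*)$ in the ($\Leftarrow$) direction. Granted surjectivity, the two feasible sets correspond one-to-one with equal objective values, and the lemma needs no convexity at all.

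But surjectivity does not follow from the assumption. For example, $f_i=\arctan$ is strictly increasing, $1$-Lipschitz and passes through the origin, yet its range is $(-\pi/2,\pi/2)$. So the paper's ($\Rightarrow$) step has a gap, and your argument closes it. Openness of the box $\mathcal U=f(\mathbb{R}^n)$ turns optimality over $\{u\in\mathcal U\}$ into local optimality for \eqref{eq: general optimization problem}, and convexity of that problem makes it global.

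Your route therefore proves the lemma under the assumption as stated, at the price of using convexity of $F$, which the paper already assumes. The paper's shorter route is valid only after adding $f_i(\mathbb{R})=\mathbb{R}$.

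One caveat remains that no argument of this type can remove. Your ($\Leftarrow$) direction takes $s^*$ as given, which matches the wording of the lemma. If one instead starts from the optimizer $u^*$ of \eqref{eq: general optimization problem}, a preimage need not exist (e.g., $f_i=\arctan$ with $\sum_i p_i$ large). In that case \eqref{eq: general optimization problem with precondition} has no minimizer. Consequently, the existence part of Lemma~\ref{lem: unique global optimal}, which the paper derives from this lemma, genuinely needs $u^*\in f(\mathbb{R}^n)$ or surjectivity of $f$.
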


\begin{proof}
    $\Longrightarrow$: $(\tilde\theta^*, \omega^*, s^*, \tilde\phi^*)$ is a global optimal solution to problem \eqref{eq: general optimization problem with precondition} with $u^*=f(s^*)$. Since $f$ is a strictly increasing bijection (by Assumption~\ref{ass: f}), the mapping $u = f(s)$ is invertible.

Let $(\hat{\tilde\theta},\hat{\omega}, \hat{u}, \hat{\tilde\phi})$ be any feasible point of problem \eqref{eq: general optimization problem}. With $\hat{u}=f(\hat{s})$, we have 
$$
F(f(s^*)) + \tfrac{1}{2}(\omega^*)^T D \omega^* \leq F(f(\hat{s})) + \tfrac{1}{2}\hat{\omega}^T D \hat{\omega}\,.
$$
This is equivalent to
$$
F(u^*) + \tfrac{1}{2}(\omega^*)^T D \omega^* \leq F(\hat{u}) + \tfrac{1}{2} \hat{\omega}^T D \hat{\omega}\,,
$$
with the change of variables. Since $(\tilde\theta^*, \omega^*, u^*, \tilde\phi^*)$ is also feasible for problem \eqref{eq: general optimization problem}, we conclude that it is the unique global optimal solution to \eqref{eq: general optimization problem}, where the uniqueness follows from Lemma~\ref{lem: uniqueness of optimality of general optimization problem}.

%%%%%%%%%%%%%%%%%%%%%%%%%%%%%%%%%%%%%%%%%%%%%%%%%%%%%%%%%%%%%%%%
$\Longleftarrow$: $(\tilde\theta^*, \omega^*, u^*, \tilde\phi^*)$ is the unique global optimal solution to problem \eqref{eq: general optimization problem}. With $s^* = f^{-1}(u^*)$, we have
% Then
$$F(f(s^*)) + \tfrac{1}{2}(\omega^*)^T D \omega^* < F(f(\hat{s})) + \tfrac{1}{2} \hat{\omega}^T D \hat{\omega}$$  
for any feasible point $(\hat{\tilde\theta},\hat{\omega}, \hat{s}, \hat{\tilde\phi})$ of \eqref{eq: general optimization problem with precondition} with $(\hat{\tilde\theta},\hat{\omega}, f(\hat{s}), \hat{\tilde\phi})\neq (\tilde\theta^*, \omega^*, u^*, \tilde\phi^*)$. Thus, $(\tilde\theta^*, \omega^*, s^*, \tilde\phi^*)$ is a unique global optimal solution to \eqref{eq: general optimization problem with precondition}.
\end{proof}

Based on Lemmas~\ref{lem: uniqueness of optimality of general optimization problem} and \ref{lem: equivalence by change a variable}, the global optimal solution $(\tilde\theta^*, \omega^*, s^*, \tilde\phi^*)$ to \eqref{eq: general optimization problem with precondition} is unique. We next establish the uniqueness of the associated Lagrange multiplier.

We first verify the Linear Independence Constraint Qualification (LICQ). Let $g_1(\tilde\theta,\omega,s):=f(s)-p-D\omega-CB\tilde\theta$ and $g_2(\tilde\phi,s):=f(s)-p-CB\tilde\phi$. The Jacobian $J_g$ of the equality constraints $(g_1,g_2)$ is written as
$$
J_g=
\begin{pmatrix}
-CB & -D & \nabla_sf & 0\\
0 & 0 & \nabla_sf & -CB
\end{pmatrix}\,.
$$
With the positive definiteness of $D$ and Assumption~\ref{ass: f}, we have $\operatorname*{rank}(J_g)=2n$. Thus, the equality constraints in \eqref{eq: general optimization problem with precondition} are linearly independent and LICQ holds.

Since $(\tilde\theta^*, \omega^*, s^*, \tilde\phi^*)$ is a global minimizer and the constraints are differentiable and satisfy LICQ at $(\tilde\theta^*, \omega^*, s^*, \tilde\phi^*)$, the classical Karush–Kuhn–Tucker (KKT) necessary conditions apply~\cite[Chapter 5.5.3]{Boyd2004convex}: there exists at least one pair of multipliers $(\nu^*,\lambda^*)$ such that $(\tilde\theta^*, \omega^*, s^*, \tilde\phi^*,\nu^*,\lambda^*)$ satisfy the KKT conditions (primal feasibility and stationarity). 

Uniqueness of the multiplier pair follows directly from LICQ. Suppose $(\nu^1,\lambda^1)$ and $(\nu^2,\lambda^2)$ are two multiplier pairs that both satisfy the stationarity conditions at the same primal point $(\tilde\theta^*, \omega^*, s^*, \tilde\phi^*)$. Subtracting the two stationarity relations yields a vector $(\delta_\nu,\delta_\lambda):=(\nu^1-\nu^2,\;\lambda^1-\lambda^2)$ such that 
$$J_g^T(\delta_\nu,\delta_\lambda)^T = 0\,.$$
But by LICQ, the only solution is $(\delta_\nu,\delta_\lambda)=(0,0)$. Hence $\nu^1=\nu^2$ and $\lambda^1=\lambda^2$. The multiplier pair is unique.
% This ends the proof of Lemma~\ref{lem: unique global optimal}

With convexity of \eqref{eq: general optimization problem}, the linear independence of its equality constraints, and uniqueness of global minimizer, $(\tilde\theta^*, \omega^*, u^*, \tilde\phi^*)$ together with $\nu^*$ and $\lambda^*$ constructed above is the unique solution to the KKT conditions of problem~\eqref{eq: general optimization problem}. By similar reasoning as in Lemma~\ref{lem: equivalence by change a variable},  the invertible monotone bijection $u = f(s)$ ensures that $(\tilde\theta^*, \omega^*, s^*, \tilde\phi^*,\nu^*,\lambda^*)$ is also the only solution satisfying the KKT conditions of \eqref{eq: general optimization problem with precondition}. This ends the proof of Lemma~\ref{lem: unique global optimal}.
%%%%%%%%%%%%%%%%%%%%%%%%%%%%%%%%%%%%%%%%%%%%%%%%%%

\subsection{Proof of Lemma~\ref{lem: equivalence of two optimization problem}}
Building upon Lemma~\ref{lem: equivalence by change a variable}, it is sufficient to establish the equivalence between problem \eqref{eq: general optimization problem} and problem \eqref{eq: opt-ss} at optimality. According to Lemma~\ref{lem: uniqueness of optimality of general optimization problem}, \eqref{eq: swing dynamics at equilibrium} can be written as 
$u^*-p-CB\tilde\theta^*=0.$
Together with \eqref{eq: power balance at equilibrium}, we obtain $CBC^T\theta^*=CBC^T\phi^*$ with $\tilde\theta^*=C^T\theta^*$ and $\tilde\phi^*=C^T\phi^*$. Following the same way in proving the uniqueness of $(\tilde\theta^*, \omega^*, u^*, \tilde\phi^*)$ with property of Laplacian matrix $L$, we have $C^T (\theta^* - \phi^*) = \tilde\theta^* - \tilde\phi^* = 0$.
%Using the property of $L = CBC^T$ with $\operatorname*{ker}(L)=\operatorname*{span}\{ 1_n \}$, it follows that
%CBC^T (\theta^* - \phi^*) = 0 \ \implies \  C^T (\theta^* - \phi^*) = \tilde\theta^* - \tilde\phi^* = 0.
% By Lemma~\ref{lem: equivalence by change a variable}, the same result holds for \eqref{eq: general optimization problem with precondition}. 
Therefore, at the optimal point, the equality constraints of problem~\eqref{eq: general optimization problem} can be written as 
$$u^*-p-D\omega^*-CB\tilde\theta^*=0, \quad u^*-p-CB\tilde\theta^*=0\,,$$
which has the same form as \eqref{eq: opt-ss} at optimal point. Moreover, since $\omega^* = 0$ at the optimum of both  \eqref{eq: opt-ss} and \eqref{eq: general optimization problem}, the quadratic term $\tfrac{1}{2}\omega^TD\omega$ does not affect the objective function. Hence, the two optimization problems \eqref{eq: opt-ss} and \eqref{eq: general optimization problem} share the same optimal solution for $\tilde \theta^*$, $\omega^*$ and $u^*$ with $\tilde \phi^* = \tilde \theta^*$. By Lemma~\ref{lem: equivalence by change a variable}, equivalence between \eqref{eq: opt-ss} and \eqref{eq: general optimization problem with precondition} is thereby established.

%%%%%%%%%%%%%%%%%%%%%%
\subsection{Proof of Theorem~\ref{thm: closed-loop equilibrium}}
The proof is based on the equivalence between the primal-dual dynamics~\eqref{eq: closed-loop system in primal-dual} and the closed-loop system~\eqref{eq: swing dynamics in vector form}, \eqref{eq: u=f(s)}, and \eqref{eq: learnable controller}, as well as the interchangable $\omega^* = \nu^*$.

$\Longrightarrow$: Suppose $(\tilde\theta^*, \omega^*, s^*, \lambda^*, \tilde\phi^*)$ is a closed-loop equilibrium. Then, by definition of equilibrium, derivatives of all variables with respect to time are all $0$. This implies stationarity of \eqref{eq: general optimization problem with precondition}. With $\dot\omega=0$ and $\dot \lambda=0$, we have $f(s^*)-p-D\omega^*-CB\tilde\theta^*=0$ and $f(s^*)-p-CB\tilde\phi^*=0$.
Thus, primal feasibility holds and $(\tilde\theta^*, \omega^*, s^*, \nu^*, \lambda^*, \tilde\phi^*)$ is a primal-dual optimal solution to \eqref{eq: general optimization problem with precondition}.

$\Longleftarrow$: suppose $(\tilde\theta^*, \omega^*, s^*, \nu^*, \lambda^*, \tilde\phi^*)$ is a primal-dual optimal solution to \eqref{eq: general optimization problem with precondition}. Then, by definition, the KKT conditions hold. The conditions imply that all right-hand sides of the closed-loop differential equations are zero. Therefore, $(\theta^*, \omega^*, s^*, \lambda^*, \tilde\phi^*)$ is an equilibrium of the closed-loop system.

%%%%%%%%%%%%%%%%%%%%%%%%%%%%%%%%%%%%%%%%%%%%%%%%%%%%%%%%%%%%%%%%%%%%%%%%%

\subsection{Proof of Theorem~\ref{thm: asymptotic stability}}

By following~\cite{jiang2022ojcsys}, we consider the following candidate Lyapunov function
\begin{equation}
\begin{aligned}
    V(z,s):=&\ \tfrac{1}{2} (z- z^*)^T \Gamma (z - z^*)\\
    &+ U(s)-U(s^*)-[\nabla U(s^*)]^T(s-s^*) \,,
\end{aligned}
\end{equation}
where $U(s):=\sum_{i=1}^n\int_0^{s_i}f_i(\xi)\mathrm{d}\xi$ and $z := ({\tilde \theta}, \omega, \lambda, \tilde \phi)$. $\Gamma:=\operatorname{diag}(B, M, (\Gamma^\lambda)^{-1}, (\Gamma^{\tilde\phi})^{-1})$ is positive definite. Point $(z^*, s^*)$ is the equilibrium of the closed-loop system.

%\textit{Step 1 ($V\geq0$ and $V=0$ iff at equilibrium):}
We first show that $V(z,s) \ge 0$, $\forall (z,s)$, and $V(z,s)=0$ if and only if the system is at the equilibrium.

With Assumption~\ref{ass: f}, $U(s)$ is a strictly convex function~\cite[Lemma 3]{jiang2022ojcsys}. Thus we have 
$$ U(s)-U(s^*)-[\nabla U(s^*)]^T(s-s^*)>0, \quad\forall s \neq s^*\,.$$
Together with $\Gamma \succ 0$, we have $V(z,s)\geq0$ and equality holds if and only if $(z,s)=(z^*, s^*)$.

Then we prove that $\dot V(z,s) \leq 0$ along the trajectories of the closed-loop system, and show that $\dot V(z,s)\equiv 0$ implies the system stays at the equilibrium, i.e., $\dot z \equiv 0$ and $\dot s \equiv 0$. 

Differentiating $V(z,s)$ over time, we obtain
$$\begin{aligned}
    \dot V(z,s) =& (z-z^*)^T \Gamma \dot z + (\nabla U(s)-\nabla U(s^*))^T  \dot s\\
     {=} &(z-z^*)^T \Gamma \dot z + (\nabla U(s)-\nabla U(s^*))^T  \dot s\\
     &-\omega^T(f(s^*)-p-CB\tilde\theta^*)\\
     &-\lambda^T(f(s^*)-p-CB\tilde\phi^*)\\
     {=} &-\omega^TD\omega-\omega^{*T}(f(s)-p-D\omega-CB\tilde\theta)\\
     &-\lambda^{*T}(f(s)-p-CB\tilde\phi)\\
     &-(f(s)-f(s^*))^T\nabla F(f(s))\\
     %\overset{\textcircled{3}\label{eq: step3 in Vdot}}{=} &-\omega^TD\omega-\lambda^{*T}(f(s)-p-CB\tilde\phi)\\
     %&-(f(s)-f(s^*))^T\nabla F(f(s))\\
     {=} &-\omega^TD\omega-(f(s)-f(s^*))^T\nabla F(f(s))\\
     &-\lambda^{*T}(f(s)-f(s^*)+CB\tilde\phi^*-CB\tilde\phi)\\
     {=} &-\omega^TD\omega-(f(s)-f(s^*))^T\nabla F(f(s)) \\
     &+(\nabla F(f(s^*)))^T(f(s)-f(s^*)) \\
      {=}& -\omega^TD\omega\\
      &-[\nabla F(f(s))-\nabla F(f(s^*))]^T(f(s)-f(s^*))\,.
      \end{aligned}$$
Here the second equality comes from $\dot{\omega}=\dot{\lambda}=0$ at the equilibrium and $\omega^*=0$. Then we have the third equality by removing the same terms. With $\omega^*=0$ and $f(s^*)-p-CB\tilde\phi^*=0$, we obtain the fourth equality. According to $\dot{\tilde\phi}=0$ and $\dot s=0$ at the equilibrium, we have $C^T\lambda^*=0$ and $\nabla F(f(s^*))=-\lambda^*$ with $\omega^*=0$, which yields the fifth and sixth equalities. 
% Then \textcircled{4} and \textcircled{5} are derived.

Now we analyze the properties of $\dot V(z,s)$. Since $D \succ 0$ and $\nabla F(\cdot)$ is strictly increasing, we obtain $\dot V(z,s) \leq 0$. Moreover, $\dot V(z,s) \equiv 0$ directly enforces $\omega \equiv 0$ and $f(s) \equiv f(s^*)$, hence $s \equiv s^*$ by the strict monotonicity of $f(\cdot)$. This further implies $\dot{\tilde \theta} \equiv C^T \omega \equiv 0$. Combining $\dot s = -[\nabla F(f(s)) + \omega + \lambda]$, we have $\lambda \equiv - \nabla F(f(s^*))$, thus $\dot \lambda \equiv 0$. Together with $s \equiv s^*$, $\dot \lambda = \Gamma^\lambda[f(s^*) - p - CB \tilde \phi] \equiv 0$ forces $C B \tilde \phi$ to be constant. Differentiating this constant term gives $C B \dot{\tilde \phi}=(C B\Gamma^{\tilde \phi}B C^T) \lambda \equiv 0$. As $C B\Gamma^{\tilde \phi}B C^T$ is the weighted Laplacian of a connected graph, this implies $\lambda \in \operatorname*{span}\{1_n \}$ and therefore $\dot{\tilde \phi}=BC^T\lambda \equiv 0$. 
% Since $D \succ 0$, and both $\nabla F(f(s))$ and $f(s)$ are strictly increasing functions passing through the origin, then we have $\dot V(z,s) \leq 0$.
This means the largest invariant set contained in the set of points where $\dot V$ vanishes is actually the equilibrium set $E$. By the LaSalle invariance principle~\cite{khalil2002nonlinear}, every trajectory of the closed-loop system~\eqref{eq: swing dynamics in vector form}, \eqref{eq: u=f(s)}, and \eqref{eq: learnable controller} converges to $E$, which contains only one equilibrium point characterized by Theorem~\ref{thm: closed-loop equilibrium}. This ends the proof for asymptotic stability.
%%%%%%%%%%%%%%%%%%%%%%%%%%%%%%%%%%%%%%%%%%%%%
\subsection{Proof of Theorem~\ref{thm: exponential convergence}}
Convergence of training implies $\lim_{T\to\infty} \sum_{i=1}^{n} R_{i,T}^\alpha = \sum_{i=1}^{n} \int_0^\infty e^{\alpha t} |\omega_i(t)|^2 dt < \infty$.
This means that each term $e^{\alpha t/2} \|\omega_i(t)\|$ is square-integrable over $[0,\infty)$. Because the system dynamics are continuous and $f(s)$ is Lipschitz, Barbalat's lemma~\cite{barbalat1959systemes} implies that $\lim_{t \to \infty} e^{\alpha t/2} |\omega_i(t)| = 0 $.
Since the function $e^{\alpha t/2} |\omega_i(t)|$ is continuous on $[0, \infty)$ and converges to a finite limit, it must be bounded by 
% The continuous function $e^{\alpha t/2} |\omega_i(t)|$ on $[0, \infty)$ that converges to a finite limit, then it is bounded with 
a constant $C_i$ \cite{rudin1976principles}, i.e., $|\omega_i(t)| \le C_i e^{-\alpha t/2}, \forall t \ge 0$.
This shows that $\omega_i(t)$ decays exponentially with rate at least $\alpha/2$.